\theoremstyle{definition}
\newtheorem{example}{Example}
\theoremstyle{definition}
\theoremstyle{definition}
\theoremstyle{plain}
\theoremstyle{definition}
\newtheorem{definition}{Definition}
\theoremstyle{definition}
\theoremstyle{plain}
\newtheorem{theorem}{Theorem}
\theoremstyle{plain}
\newtheorem{proposition}{Proposition}
\theoremstyle{plain}
\newtheorem{lemma}{Lemma}
\theoremstyle{plain}
\theoremstyle{definition}
\theoremstyle{remark}
\newtheorem{remark}{Remark}
\theoremstyle{definition}
\theoremstyle{definition}
\theoremstyle{definition}
\theoremstyle{plain}
\newenvironment{customthm}[1]
  {\innercustomthm}
  {\endinnercustomthm}
\newcommand{\red}[1]{{\color{red}#1}}
\newcommand{\blue}[1]{{\color{blue}#1}}
\newcommand{\abs}[1]{|#1|}
\newcommand{\set}[1]{\left\{#1\right\}}
\newcommand{\ceiling}[1]{\lceil#1\rceil}
\newcommand{\Ceiling}[1]{\Big\lceil#1\Big\rceil}
\newcommand{\flooring}[1]{\lfloor#1\rfloor}
\DeclareMathOperator*{\argmax}{argmax}
\newcommand{\shareNumber}{\kappa}
\newcommand{\neighbourhood}[1]{\ensuremath{N_{#1}}}
\newcommand{\degree}[1]{\ensuremath{d_{#1}}}
\newcommand{\strategySet}[1]{\ensuremath{X_{#1}}}
\newcommand{\strategy}[1]{\ensuremath{x_{#1}}}
\newcommand{\nominateSet}[2]{\ensuremath{M_{#1}^{#2}}}
\newcommand{\nominate}[1]{\ensuremath{m_{#1}}}
\newcommand{\Bcal}{\mathcal{B}}
\newcommand{\bestResponse}[2]{\Bcal_{#1}({#2})}
\newcommand{\bestReply}[2]{\Bcal_{#1}({#2})}
\newcommand{\Natural}{\mathbb{N}}
\newcommand{\Real}{\mathbb{R}}
\newcommand{\xbar}{\bar{x}}
\newcommand{\sbold}{\boldsymbol{x}}
\newcommand{\sboldstar}{\sbold^{*}}
\newcommand{\Sbold}{\boldsymbol{X}}
\newcommand{\nominateBold}{\boldsymbol{m}}
\newcommand{\nominateBoldStar}{\nominateBold^{*}}
\newcommand{\utility}[1]{\ensuremath{U_{#1}}}
\newcommand{\qstar}{q^{*}}
\newcommand{\biClique}{bi-clique neighbourhood graph}
\begin{document}

\title{ \huge Public goods in networks with constraints on sharing}
\author{Stefanie Gerke\footnote{Mathematics Department, Royal Holloway University of London, Egham TW20 0EX, UK.} \and Gregory Gutin\footnote{Computer Science Department, Royal Holloway University of London, Egham TW20 0EX, UK.} \and Sung-Ha Hwang\footnote{College of Business, Korea Advanced Institute of Science and Technology (KAIST), Seoul, Korea.} \and Philip R.\ Neary\footnote{Economics Department, Royal Holloway University of London, Egham TW20 0EX, UK.}}



\date{\today}

\maketitle


\linespread{1.3}


\begin{abstract}
\noindent
This paper considers incentives to provide goods that are partially shareable along social links.
We introduce a model in which each individual in a social network not only decides how much of a shareable good to provide, but also decides which subset of neighbours to nominate as co-beneficiaries.
An outcome of the model specifies an endogenously generated subnetwork of the original network and a public goods game occurring over the realised subnetwork.
We prove the existence of \emph{specialised} pure strategy Nash equilibria:\ those in which some individuals contribute while the remaining individuals free ride.
We then consider how the set of efficient specialised equilibria vary as the constraints on sharing are relaxed and we show that, paradoxically, an increase in shareability may decrease efficiency.
\end{abstract}






\newpage


\setstretch{1.3}

\section{Introduction}\label{INTRODUCTION}

Economists have long been aware of challenges that arise with the provision of public goods.
Despite the benefits of increased provision, the fact that the good is shareable can generate tension since all parties arrives at the same conclusion:\ ``I would rather someone else pays the cost of providing''.
\cite{BramoulleKranton:2007:JET}, hereafter BK, initiated the study of how societal structure might impact public good provision.\footnote{Other papers examine similar issues. See, for example, \cite{Allouch:2015:JET, Allouch:2017:GEB}, \cite{Baetz:2015:TE}, \cite{BramoulleKranton:2014:AER}, \cite{ElliottGolub:2019:JPE}, and \cite{KinatederMerlino:2017:AEJM}.}
Specifically, BK asked:\ does the social network governing who interacts with who have bearing on who will provide and who will not?
By showing that societal structure not only facilitates but also encourages {\it specialisation} (i.e., outcomes in which some individuals contribute while the remaining individuals free ride), BK confirmed that the answer to this question is a resounding ``yes''.



In this paper, we extend the BK model to allow for scenarios in which providers of the public good are limited in their ability to disseminate it.
That is, an individual may only share their contribution with a fixed number of neighbours, and precisely which set of neighbours must be specified.
Examples where constraints on sharing seem a more reasonable modelling choice include who to offer a ride in your five-seater car to (when the group of friends is six or more), and deciding which three family members to share your Netflix password with.
We focus on two main issues.
When there are restrictions placed sharing, is specialisation in public good provision still be consistent with equilibrium?
And if yes, do the set of specialised equilibrium outcomes change as shareability varies?
In particular, will an increase in the ability to share always lead to greater efficiency?

The simplest version of our model is termed the ``Netflix Game'' as it is inspired by the online streaming provider Netflix.
The quantity choice is binary:\ each individual in a social network $G$ decides either to purchase a Netflix account or not (0 or 1).
If individual $i$ purchases an account, she then {\it nominates} exactly $\shareNumber(i)$ neighbours as co-beneficiaries, where $\shareNumber(i)$ is an exogenously given number known as $i$'s {\it sharing capacity} (henceforth capacity).\footnote{Two quick comments. First,  we assume throughout this section that $\shareNumber(i)$ is less than or equal to $i$'s number of neighbours. Second, in applications, for example with Netflix sharing, $\shareNumber(i)$ is often the same for all $i$.}
Preferences are such that it is better to have access to Netflix than not, even if that means paying for an account yourself.
However, due to the cost, it is preferable that a neighbour purchases an account and nominates you as a co-beneficiary than vice-versa.\footnote{\label{fn:nominate}A nomination is an offer of access that may or may not be exercised. In the binary action Netflix Game, each additional nomination yields no extra benefit. This seems reasonable to us:\ what can one do with access to two Netflix accounts that one cannot do with access to only one? In the general model that we introduce later, we allow for the possibility that strictly more of the good is always strictly beneficial and so any additional nomination is always exercised.}




A pure strategy in the Netflix Game specifies an endogenously generated subnetwork of the original network and a public goods game occurring over the realised subnetwork.
Our focus is on pure strategy Nash equilibria, wherein every individual is one of two kinds:\ those who purchase a Netflix account, the $D$-set, or those who free ride, the $P$-set.
At any pure strategy Nash equilibrium, each individual $i$ in $D$ nominates exactly $\shareNumber(i)$ neighbours while each individual in $P$ must be nominated (by at least one neighbour from $D$).\footnote{An offer of access to Netflix flows along the edges of the endogenously generated subnetwork from those in $D$ to those in $P$.}
Our first result, Theorem \ref{theorem:balancedPSNE}, confirms the existence of a pure strategy Nash equilibrium for any Netflix Game.
The proof is constructive and amounts to proving the existence of a novel type of spanning bipartite subgraph termed a $DP${\it-Nash subgraph} (see Definition~\ref{def:DPNash}).






In any pure strategy Nash equilibrium every individual must have access to Netflix (since if someone is without access, it is a best-response to buy access).
Given this, and given the fact that a Netflix account is not only costly but also costs the same for everyone, we can relate equilibria to efficiency.
We call an equilibrium {\it efficient} ({\it inefficient}) when the size of the $D$-set is smaller (greater) than any other equilibrium.
One might conjecture that these extreme efficient (inefficient) equilibria can only improve as the ability to share increases.
%
We show via some examples that this is not the case.
In particular, the most efficient equilibrium need not occur when capacity is maximal, i.e., when each individual has the ability to share with all of their neighbours.
This means that, paradoxically, restrictions on the ability to share can improve societal outcomes in a world where sharing bestows benefit on others.\footnote{This ``improvement'' is from the perspective of demand. On the (un-modelled) supply side, Netflix Inc.\ may prefer selling more accounts, in which case restricting sharing would reduce profits.}
However, for any two ordered capacity functions $\shareNumber$ and $\shareNumber'$ (i.e., $\shareNumber(i) \leq \shareNumber'(i)$ for all $i$), Theorem \ref{thm:delta} shows that the most efficient equilibrium for $\shareNumber'$ is never less efficient than the most inefficient equilibrium for $\shareNumber$.

With the above ideas fixed we can now introduce our full model.
The extension occurs along two dimensions:\ (i) quantity choice is no longer simply $0$ or $1$ but rather any non-negative integer, and (ii) preferences are now defined by a positive quantity, $\qstar$, at which an individual would always (at least weakly) benefit from more of the good but would never pay for more themselves, since the marginal benefit of each additional unit is nonincreasing whereas marginal cost is constant. ({In the binary action Netflix Game, $\qstar = 1$, and this amount fully satiates each individual. The interpretation being that no additional benefit is accrued from access to more than one Netflix account. Recall Footnote \ref{fn:nominate}.)

In the full model there may be pure strategy Nash equilibria in which each of two or more individuals contribute a positive quantity less than $\qstar$.
{A simple example would be a two-person network where each individual provides a positive quantity that they share with the other and these two quantities sum to $\qstar$.}
We follow BK and focus on so-called \emph{specialised} pure strategy Nash equilibria, wherein each contributing individual provides exactly $\qstar$ units of the good while those free-riding provide nothing.
A specialised equilibrium in the BK model coincides with the well-studied graph-theoretic notion of an {\it independent dominating set}, also called a maximal independent set (see \cite{GoddardHenning}).
That is, when sharing is required to be done up to capacity, no two neighbours can both supply $\qstar$ in equilibrium. 
This is intuitive since if one is already receiving $\qstar$ from a neighbour, one will not pay for any themselves because the cost outweighs the benefit.

Two neighbouring individuals can both supply $\qstar$ in the specialised equilibria of our model because individuals may be constrained in their ability to share.
The reason being that they need not share with each other (for an example see the specialised equilibrium depicted in Figure~\ref{fig:twoStarConnectedEquilibria} in which $I$ and $J$ are connected in the social network and yet both provide $\qstar$).
Mathematically, those who supply in the specialised equilibria of our model are described by $D$-sets, a generalisation of independent dominating sets, that exhibit two additional properties of interest.
First, no two independent dominating sets can exhibit set inclusion but it is possible for two $D$-sets to be so ordered.
When this occurs, the corresponding pair of equilibria can be ranked by the Pareto criterion.
Second, for some capacity functions a graph may possess a unique $D$-set.
This means that, in contrast to BK, there are instances of our model wherein there is a unique specialised equilibrium.\footnote{\cite{GutinNeary:2023:DAM} classify when our model admits a unique pure strategy equilibrium and show that the question of uniqueness raises interesting issues of computational complexity.}
That is, constraints on sharing can curb equilibrium multiplicity.

Following BK, we repeat the game and introduce best-response dynamics.
Because an individual's choice of nomination is only payoff relevant to others, the number of best-responses can be enormous.
As such, we start with the nomination component of equilibrium strategy profiles held fixed and we consider unilateral deviations only in action choice.\footnote{We emphasise that there is no clear game-theoretic defence of this assumption. Rather it seems to us (i) the simplest way to begin an analysis of dynamics (and therefore a sensible place to start), and (ii) consistent with the applied intuition that links are slower to adapt than behaviour.}
That is, from a particular equilibrium we force a change in one individual's action choice, and from there we ``let the system go'' tracking how population behaviour evolves as all individuals update the action choice every period.
An equilibrium is said to be stable if the dynamics return to it.
Proposition \ref{proposition:necessary} shows that specialised equilibria are necessary for stability.
Proposition \ref{proposition:notSufficient} shows that specialised equilibria are not sufficient for stability, but, Theorem~\ref{theorem:sufficient} shows that a specialised equilibrium supported a by a $DP$-Nash subgraph satisfying a mild density condition is sufficient for stability.






Our model generalises that of BK to include a component of network formation.\footnote{Recently \cite{AllouchKing:2019:JPET} generalised the BK model in a different direction by allowing for constrained provision. Formally, this is done by extending payoffs to allow for the possibility that a person in isolation would contribute maximally and yet still have marginal benefit exceed marginal cost. Just as the model is a generalisation, so too is the equilibrium concept. The analog of specialised equilibria are given by \emph{insulated sets} \citep{JagotaNarasimhan:2001:DAM} that generalise independent dominating sets (albeit in a different way to $D$-sets).}
Despite this, in some ways our work is arguably closer to that of \cite{GaleottiGoyal:2010:AER}.
Like us, they propose a model of public good provision in which individuals may nominate others in society.
The differences are that in their model (i) nominations are costly, and (ii) an individual benefits from the contributions of those who nominate him and also of those who he himself nominates.
This ability to \emph{piggy back} on the (potentially large) contributions of others propels the system towards orderly-looking, so called {\it core-periphery}, networks.
This should be contrasted with our model wherein the social networks that arise in equilibrium are determined by the original underlying network and the specifics of the capacity function. And these can vary substantially.




Before beginning the paper proper, we touch on an important modelling choice.
We require that each individual $i$ with capacity $\shareNumber(i)$ must nominate precisely $\shareNumber(i)$ neighbours and not some subset of neighbours whose size is no greater than $\shareNumber(i)$.\footnote{In such a modification, specialised equilibria correspond to {\it capacitated dominating sets} - a construct that has received attention in graph theory \citep{Guha:2003vc,CyganPW11,KaoCL15}. The specialised equilibria of our model are then referred to as {\it exact capacitated dominating sets}.}
We have two reasons for this, one mathematical and the other economic. 
The mathematical justification is straightforward. 
By insisting that capacity constraints are saturated our model is a generalisation of existing public goods problems on networks.
In particular, the ``limiting'' case of our model, that in which each individual's capacity equals their number of neighbours, \emph{is} the model of BK (since a $D$-set for a graph where every vertex has capacity equal to its number of neighbours \emph{is} an independent dominating set). 
Were we instead to allow individual $i$ to nominate any number of neighbours weakly less than $\shareNumber(i)$, this would not hold true and so it would be difficult to compare specialised equilibria of our model with those of BK.

The economic justification for requiring full saturation is more subjective.
It seems antithetical to public goods issues that individuals can be excluded (at least up until a capacity constraint binds).
Given that sharing is costless it seems natural to us that ``full nomination'' will occur.
And once we assume that individuals are willing to voluntarily share resources, why not assume they do so maximally?
In Section~\ref{NomDYNAMICS}, we briefly consider the variant of the model in which each individual $i$ may nominate up to but no more than $\shareNumber(i)$ neighbours. 
For a large class of networks, that includes many of those considered in BK,  we show that only specialised equilibria in which every individual saturates their sharing capacity are stable to deviations in nomination.

We conclude the introduction by relating our model to those of games on endogenous networks.\footnote{See, amongst others, \cite{Baetz:2015:TE}, \cite{Cho:2010:IJGT}, \cite{GoyalJabbari:2016:}, \cite{GoyalVega-Redondo:2005:GEB}, \cite{HojmanSzeidl:2006:GEB} \cite{KinatederMerlino:2017:AEJM}, \cite{SadlerGolub:2021:arXiv}, \cite{Staudigl:2011:GEB}, and \cite{StaudiglWeidenholzer:2014:JET}.}
One interpretation of our model is that the exogenously specified network represents constraints on who each individual may nominate.
Viewed in this light, when the exogenous network is a complete graph, each individual $i$ is unconstrained and so may nominate any subset of individuals of size $\shareNumber(i)$.
Thus, a public goods game in endogenous networks is but a special case of our model.
We discuss this interpretation briefly in Appendix \ref{APP:endogenous}.
Amongst other things, we show that our equilibrium existence result (Theorem \ref{theorem:balancedPSNE}) goes through under this interpretation.

The remainder of the paper is organised as follows. Section \ref{EXAMPLE} motivates our analysis with three examples. Section \ref{MODEL} introduces the model and proves existence of a specialised Nash equilibrium for every instance of the model.
Section \ref{COMPARATIVESTATICS} examines comparative statics and efficiency.
Section \ref{DYNAMICS} introduces dynamics and shows that specialised equilibria are necessary for stability. Section \ref{CONCLUSION} concludes with a summary of our results and some suggestions for further research on this topic.

\section{Examples}\label{EXAMPLE}


This section discusses three examples that illustrate features of the model and highlight some of our main results.
The first example shows that the set of specialised equilibria in a world of local public good provision can change dramatically with the introduction of capacity constraints.
The second example shows how the size of $D$-sets may evolve non-monotonically as the constraints on capacity are relaxed.
In particular, and in our view paradoxically, the most efficient equilibrium outcomes may not occur when the ability to share is maximal.
The third example previews Section~\ref{DYNAMICS} on dynamic stability. 
This example illustrates how pure strategy equilibria that are not specialised can quickly unravel when even one individual changes their action choice.


\begin{example}[Netflix provision]\label{ex:Netflix}
There is a social network of $5$ individuals arranged in a star as depicted in Figure \ref{fig:5star}. We label the peripheral individuals by $h, i, j$, and $k$ and the central individual by $\ell$.

%
%
%
%
%
%
%
%
%
%
%

\begin{figure}[hbt!]
\centering
\tikzstyle{vertexX}=[circle,draw, top color=gray!10, bottom color=gray!70, minimum size=10pt, scale=0.9, inner sep=0.5pt]
\tikzstyle{vertexY}=[circle,draw, top color=black!10, bottom color=gray!70, minimum size=25pt, scale=0.8, inner sep=0.4pt]
\begin{tikzpicture}[scale=0.5]
\node (y) at (5.0,5.0) [vertexY] {$\ell$};
\node (x1) at (2.5,2.5) [vertexY] {$k$}; 
\node (x2) at (2.5,7.5) [vertexY] {$h$};
\node (x3) at (7.5,7.5) [vertexY] {$i$}; 
\node (x4) at (7.5,2.5) [vertexY] {$j$}; 
\draw [thick] (y) -- (x1); 
\draw [thick] (y) -- (x2); 
\draw [thick] (y) -- (x3); 
\draw [thick] (y) -- (x4); 
\end{tikzpicture}
\caption{A 5-person star network.}\label{fig:5star}
\end{figure}

Each individual can purchase a Netflix account or not.
Buying a Netflix account costs $c > 0$ but brings a benefit that exceeds this.
Not having access to Netflix leaves each individual with a payoff of zero.
The company's current rules permit any individual who purchases an account to stream simultaneously on a maximum of five devices.
We assume that edges in the network represent close friendships so that any person who purchases can and will share with each of his friends.
Formally, this is modelled as the simultaneous-move, so-called, best shot game of \cite{GaleottiGoyal:2010:RES} where each agent has strategy set $\set{0, 1}$, with $1$ meaning purchase a Netflix account and $0$ meaning don't.\footnote{In the same way that our general model generalises BK's model, the binary action Netflix Game generalises the best-shot game of \cite{GaleottiGoyal:2010:RES}.}

The best case scenario for each player is that a neighbour purchases a Netflix account.
This is optimal because then one has access to Netflix without paying the cost.
Given this, it is a best-response for each player to purchase a Netflix account if and only if no neighbour does.
In this set up there are two pure strategy equilibria.
In the first, only the central individual, $\ell$, purchases. In the second, all the peripheral individuals, $h, i, j$, and $k$ purchase and $\ell$ does not.
These two equilibria are depicted in Figure \ref{fig:starEquilibria} below, with buyers in \blue{blue} and non-buyers in \red{red}.
The direction of sharing is indicated by arrows, with the tail of any arrow originating at a buyer and the head of an arrow pointing to those with whom the buyers offers to share access.
In both equilibria the set of adopters forms an independent dominating set so both equilibria are Pareto-efficient.

%
%
%
%
%
%
%
%
%

\begin{figure}[hbt!]
\centering
\tikzstyle{vertexX}=[circle,draw, top color=gray!10, bottom color=gray!70, minimum size=10pt, scale=0.9, inner sep=0.5pt]
\tikzstyle{vertexY}=[circle,draw, top color=black!10, bottom color=red!70, minimum size=25pt, scale=0.8, inner sep=0.4pt]
\tikzstyle{vertexZ}=[circle,draw, top color=black!10, bottom color=blue!70, minimum size=25pt, scale=0.8, inner sep=0.4pt]
\tikzset{arc/.style = {->,> = latex'}}
\hfill
\begin{tikzpicture}[scale=0.5]
\node (y) at (5.0,5.0) [vertexZ] {$\ell$};
\node (x1) at (2.5,2.5) [vertexY] {$k$}; 
\node (x2) at (2.5,7.5) [vertexY] {$h$};
\node (x3) at (7.5,7.5) [vertexY] {$i$}; 
\node (x4) at (7.5,2.5) [vertexY] {$j$}; 
\draw [arc,line width=1.3pt] (y) -> (x1); 
\draw [arc,line width=1.3pt] (y) -> (x2); 
\draw [arc,line width=1.3pt] (y) -> (x3); 
\draw [arc,line width=1.3pt] (y) -> (x4); 
\end{tikzpicture} \hfill 
\begin{tikzpicture}[scale=0.5]
\node (y) at (5.0,5.0) [vertexY] {$\ell$};
\node (x1) at (2.5,2.5) [vertexZ] {$k$}; 
\node (x2) at (2.5,7.5) [vertexZ] {$h$};
\node (x3) at (7.5,7.5) [vertexZ] {$i$}; 
\node (x4) at (7.5,2.5) [vertexZ] {$j$}; 
\draw [arc,line width=1.3pt] (x1) -> (y); 
\draw [arc,line width=1.3pt] (x2) -> (y); 
\draw [arc,line width=1.3pt] (x3) -> (y); 
\draw [arc,line width=1.3pt] (x4) -> (y); 
%
\end{tikzpicture} \hfill \hfill 
\caption{Equilibria for 5-person star network with full sharing}
\label{fig:starEquilibria}
\end{figure}


Now let us consider what would happen if Netflix altered the number of devices that one account can simultaneously access. Let the number of people that may simultaneously use the service other than the account holder be denoted by $\shareNumber$. An individual who purchases an account can nominate only $\shareNumber$ of her neighbours (and will nominate all of her neighbours if she has less than $\shareNumber$). For $\shareNumber = 1, 2$, and $3$, the only equilibrium outcome is for the peripheral players to purchase an account (and each to nominate the central individual, $\ell$, as the friend who may use the account free of charge).
These equilibrium outcomes have a $D$-set of size 4 and are each depicted in the right hand panel of Figure \ref{fig:starEquilibria}.

The outcome depicted in the left hand panel is no longer supported by an equilibrium.
The reason is that if $\ell$ buys an account then she can only offer to share with 3 of her 4 neighbours which will leave one, say $i$, without access.
It is then optimal for the un-nominated $i$ to buy an account. But $i$ will then offer to share access with $\ell$ who subsequently would no longer want to buy. So in this example, restricting attention to equilibria, the number of adopters in equilibrium has decreased as Netflix Inc.\ allow more ``shareability''. 
\end{example}

\begin{example}[Netflix provided efficiently]\label{ex:twoStars}
The previous example suggests a natural conjecture:\ that a reduction in the ability to share can only increase the number of Netflix accounts that are sold in equilibrium.
The following example refutes the conjecture and also highlights some other interesting features of the set up.


There is a social network of six individuals arranged as depicted in Figure \ref{fig:twoStarConnected}. We label the two central individuals by $I$ and $J$ and the four peripheral individuals by $i_{1}, i_{2}, j_{1}$, and $j_{2}$ (where the letter index on the peripheral individuals indicates which of the two central individuals they are linked to).
%

\begin{figure}[ht!]
\centering
\begin{tikzpicture}

\tikzstyle{vertexX}=[circle,draw, top color=gray!10, bottom color=gray!70, minimum size=10pt, scale=0.9, inner sep=0.5pt]
\tikzstyle{vertexY}=[circle,draw, top color=black!10, bottom color=gray!70, minimum size=25pt, scale=0.8, inner sep=0.4pt]
\tikzset{vertex/.style = {shape=circle,draw,minimum size=1.5em}}
\tikzset{edge/.style = {->,> = latex'}}
\node[vertexY] (a) at  (0,3) {$i_{1}$};
\node[vertexY] (c) at  (0,0) {$i_{2}$};

\node[vertexY] (e) at  (1.5, 1.5) {$I$};

\node[vertexY] (l) at  (5, 1.5) {$J$};

\node[vertexY] (i) at  (6.5,0) {$j_{2}$};
\node[vertexY] (k) at  (6.5,3) {$j_{1}$};


\draw (a) edge (e);


\draw (c) edge (e);

\draw (e) edge (l);

\draw (i) edge (l);


\draw (k) edge (l);

\end{tikzpicture}
\caption{A six person social network.}
\label{fig:twoStarConnected}
\end{figure}


As in Example~\ref{ex:Netflix}, each individual wishes to utilise the online media services provider Netflix.
The point of this example is to highlight how the most-efficient equilibria can vary non-monotonically with $\shareNumber$. 
Paradoxically, efficiency need not be gained by increasing the ability to share.
(To economise on space we are somewhat loose and describe equilibria by listing only the $D$-set, i.e., those that purchase in equilibrium.)

When $\shareNumber$ is equal to 1 for everybody, there is a unique equilibrium with $D$-set $D_{1} = \set{i_{1}, i_{2}, j_{1}, j_{2}}$.
When $\shareNumber$ is increased to 2, three new equilibria emerge.
This collection of equilibria have $D$-sets given by $D_{1}$, $D_{2} = \set{I, J}$, $D_{3} = \set{i_{1}, i_{2}, J}$ and $D_{4} = \set{I, j_{1}, j_{2}}$.
Thus for $\shareNumber$ equal to 2, there is an equilibrium with $D$-set of size 2.
This equilibrium is depicted in Figure \ref{fig:twoStarConnectedEquilibria} below as a subgraph of the social network with the same colour coding as in Example~\ref{ex:Netflix}.
When $\shareNumber$ increases to $3$, the model reduces to the best-shot game for which the $D$-set in equilibrium form a maximal independent set (the collection of which is $D_{1}, D_{3},$ and $D_{4}$).
Note that the most efficient equilibrium in this case has $D$-sets is of size 4, so this is less efficient than when $\shareNumber = 2$.

\begin{figure}[ht!]
\centering
\begin{tikzpicture}

\tikzstyle{vertexX}=[circle,draw, top color=gray!10, bottom color=gray!70, minimum size=10pt, scale=0.9, inner sep=0.5pt]
\tikzstyle{vertexY}=[circle,draw, top color=black!10, bottom color=red!70, minimum size=25pt, scale=0.8, inner sep=0.4pt]
\tikzstyle{vertexZ}=[circle,draw, top color=black!10, bottom color=blue!70, minimum size=25pt, scale=0.8, inner sep=0.4pt]
\tikzset{arc/.style = {->,> = latex'}}
\tikzset{vertex/.style = {shape=circle,draw,minimum size=1.5em}}
\tikzset{edge/.style = {->,> = latex'}}
\node[vertexY] (a) at  (0,3) {$i_{1}$};
\node[vertexY] (c) at  (0,0) {$i_{2}$};

\node[vertexZ] (e) at  (1.5, 1.5) {$I$};

\node[vertexZ] (l) at  (5, 1.5) {$J$};

\node[vertexY] (i) at  (6.5,0) {$j_{2}$};
\node[vertexY] (k) at  (6.5,3) {$j_{1}$};


\draw [thick, dashed] (e) -- (l); 

\draw [arc,line width=1.3pt] (e) -> (a); 


\draw [arc,line width=1.3pt] (e) -> (c); 


\draw [arc,line width=1.3pt] (l) -> (i); 


\draw [arc,line width=1.3pt] (l) -> (k);

\end{tikzpicture}
\caption{Most efficient equilibrium occurs with $\shareNumber = 2$.}
\label{fig:twoStarConnectedEquilibria}
\end{figure}


We now note some other interesting features of the model illustrated by this example.
First, when $\shareNumber$ is equal to 1, there is no equilibrium in which $I$ or $J$ are contained in a $D$-set.
In fact, when $\shareNumber$ is equal to 1, there is unique equilibrium.
The possibility of a player not providing in some equilibrium and the possibility of a unique equilibrium highlights a difference between our model and both the best-shot game and the model of BK for which equilibria correspond to maximal independent sets.\footnote{We note that however that there are other models of public good provision on networks that also admit unique equilibria (e.g., \cite{Allouch:2015:JET} and \cite{KinatederMerlino:2021:arXiv}).}
That is, two immediate differences between independent dominating sets and $D$-sets are the following:\ (i) every vertex is always part of at least one independent dominating set, and (ii) there are always at least two independent dominating sets (unless the graph being considered has no edges).

Second, and referencing the original conjecture, the equilibrium with the smallest $D$-set occurs with $\shareNumber$ equal to 2 for each individual, at which point individuals $I$ and $J$ both have degree greater than their capacity.\footnote{Note that the size of the smallest $D$-set can not only change with capacity but can do so to an arbitrary extent. To see this suppose in the example above that $I$ and $J$ each have $k$ friends. Then for $\shareNumber = k-1$, the smallest $D$-set is the set of the peripheral agents with size $2k -2$, for $\shareNumber = k$, the smallest $D$-set is $\set{I, J}$ of size 2, but for $\shareNumber = k+1$ the smallest $D$-set has size $k+1$.}
Furthermore, this is an equilibrium in which two neighbours in the social network both buy a Netflix account.

Third, the number of equilibria has changed from one when $\shareNumber = 1$, to four when $\shareNumber =2$, to three when sharing is maximal ($\shareNumber = 3$).

Fourth, and finally, consider an amendment to the social network in Figure \ref{fig:twoStarConnected} such that $i_{1}$ and $j_{2}$ are also connected.
For $\shareNumber = 1$ there remains an equilibrium with $D$-set $D_{1}$, but there is also an equilibrium with $D$-set $\set{i_1, i_2, j_2}$ in which $i_{1}$ offers access to $j_{1}$ and $i_{j}$ accepts.
We note that the latter is a Pareto improvement as these $D$-sets can be ordered by (strict) set-inclusion.
%
\end{example}

\begin{example}[Non-specialised equilibria are unstable]\label{ex:dynamics}
The previous examples were special cases of the general model - those where the action set is simply $\set{0, 1}$ (i.e., do not purchase / purchase).
In the general model, the action set is the set of non-negative integers and there is a quantity, denoted by $\qstar$, beyond which marginal cost exceeds marginal benefit.
Once an individual is consuming $\qstar$, she would gladly accept more of the good from neighbours who nominate her, but she would not be willing to pay for any excess herself.

In this richer model, there can be pure strategy equilibria where some individuals choose a strictly positive action choice that is less than $\qstar$.
Following the terminology of \cite{BramoulleKranton:2007:JET}, we call equilibria in which all individuals choose either action $0$ or action $\qstar$ \emph{specialised}.
(In the binary action Netflix Game of Examples~\ref{ex:Netflix} and \ref{ex:twoStars}, the action set is $\set{0, 1}$, so all pure strategy equilibria are specialised by definition.)
We now give an example of a pure strategy equilibrium that is not specialised and is not stable.
In Section~\ref{DYNAMICS}, Proposition \ref{proposition:necessary} shows that this is not an artefact of this example since all non-specialised equilibria are unstable.

We consider a social network with 6 individuals denoted $i, j, k, \ell, m$, and $n$. Individual $j$ is linked to everybody, individuals $i$ and $k$ are linked to $j$ and one other, while all other individuals, $\ell, m$, and $n$, are linked to $j$ and exactly two others. This network is depicted in Panel A of Figure \ref{fig:dynamics} below.

\begin{figure}[ht!]
\centering
\scalebox{0.7}
{\includegraphics{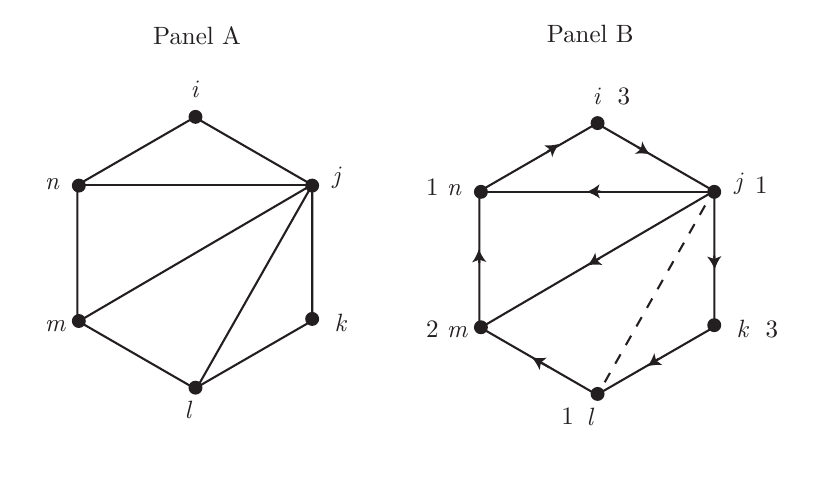}}
\caption{A 6-person network}
\label{fig:dynamics}
\end{figure}

We suppose that $\qstar = 4$.
In addition to the larger action space, this example is more complex in that we allow capacities to differ across agents. Specifically, we suppose that the capacity for $j$ is equal to 3 and the capacity for all other individuals is equal to 1.

Panel B of Figure \ref{fig:dynamics} presents a pure strategy equilibrium that is not specialised. At this equilibrium individuals choose positive quantities where the quantity is given by the number beside their label.
As before arrows depict nominations with the number of arrows originating at any vertex equal to that individual's capacity.
It is easy to check that the above nominations and action choices constitute a pure strategy Nash equilibrium:\ for every individual, simply verify that their own quantity choice added to the in-flow of quantity choices from the neighbours who nominate them sums to 4 $(=\qstar)$.


Now let us introduce dynamics in choice of action while holding fixed the nominations of everyone.
Let us imagine that individual $i$ unilaterally decreases his action choice from $3$ to $2$.
We refer to the time, $t$, at which this happens by $0$.
We label the action profile at this time by $\sbold^{(0)}$, where ordering the players as before we have that $\sbold^{(0)} = (\strategy{i}^{(0)}, \strategy{j}^{(0)}, \strategy{k}^{(0)}, \strategy{\ell}^{(0)}, \strategy{m}^{(0)}, \strategy{n}^{(0)}) = (2, 1, 3, 1, 2, 1).$ Our interest is in the evolution of the sequence of action profiles $\set{\sbold^{(t)}}_{t \geq 0}$ where for all $t \geq 1$, elements in the profile $\sbold^{(t)}$ specify the best-action response for the relevant individual to $\sbold^{(t-1)}$.

We reiterate that the nominations of each individual are held fixed so the only updating that is done is in action choice.
This is restrictive but to study dynamics without this simplification is not easy since without it the network could also be evolving every period.
We note that in the static case, it did not matter who an individual with action choice of $0$ nominated.
But in the dynamic environment, wherein this individual may change their action choice, such repercussions that can only be analysed and tracked if all nominations are known.\footnote{In Section~\ref{NomDYNAMICS} we consider unilateral deviations in the nomination of one player. After the change in the nomination, the nominations of everyone, including that of the deviator, are held fixed. Equilibria robust to this form of nomination are said to be {\it nomination stable}.}

In terms of how individuals update their action choice, we assume the behavioural rule of myopic best-response (in action, not in nomination as this is held fixed).
Specifically, each agent examines the total that he currently consumes.
If this total is less than 4, then he makes up the difference by increasing his own supply; if this total is more than 4 then he decreases his own supply (if possible).
Thus given the reduction in $i$'s action choice from 3 to 2 in period $t=0$, in period $t=1$, the only individuals who will alter their action are $i$ and $j$ as each receive a total of 3 ($i$ provides 2 himself and receives 1 from $n$ who nominated him, while $j$ provides 1 himself and receives 2 from agent $i$ who nominated him).
Since both are 1 unit short of $\qstar = 4$, they will each increase their period $t=0$ action by 1.
We thus get that $\sbold^{(1)} = (\strategy{i}^{(1)}, \strategy{j}^{(1)}, \strategy{k}^{(1)}, \strategy{\ell}^{(1)}, \strategy{m}^{(1)}, \strategy{n}^{(1)}) = (3, 2, 3, 1, 2, 1)$.
Given this, we can employ the same rules as before to compute $\sbold^{(2)}$, and so on.
Thus, the complete evolution of behaviour from action profile $\sbold^{(0)}$ can be tracked, and is presented in Table \ref{table:dynamics} below. 


\begin{table}[htp!]
\begin{center}
\begin{tabular}{|c||c|c|c|c|c|c|}
\hline
$t$& $\strategy{i}^{(t)}$ & $\strategy{j}^{(t)}$ & $\strategy{k}^{(t)}$ & $\strategy{\ell}^{(t)}$ & $\strategy{m}^{(t)}$ & $\strategy{n}^{(t)}$ \\
\hline\hline
1&2&1&3&1&2&1 \\
\hline2& 3&2&3&1&2&1\\
\hline3 &3&1&2&1&1&0\\
\hline4 &4&1&3&2&2&2\\
\hline5 &2&0&3&1&1&1\\
\hline6 &3&2&4&1&3&3\\
\hline7 &1&1&2&0&1&0 \\
\hline8 &4&3&3&2&3&2 \\
\hline9 &2&0&1&1&0&0\\
\hline10 &4&2&4&3&3&4 \\
\hline11 &0&0&2&0&0&0\\
\hline12 &4&4&4&2&4&4\\ 
\hline13 &0&0&0&0&0&0 \\
\hline14 &4&4&4&4&4&4\\
\hline15 &0&0&0&0&0&0\\
\hline
\end{tabular}
\caption{Evolution of behaviour under best-action reply dynamic.}
\end{center}
\label{table:dynamics}
\end{table}%

We note that the action profiles listed in Table~\ref{table:dynamics} evolve in a seemingly irregular way until period 13 at which point it starts to cycle between everybody choose $4$ in even periods and everybody choose $0$ in odd periods.
It is clear that a cycle of this form will never be exited and so any hope of returning to the equilibrium with action profile $\sbold^{(0)}$ is lost.

Let us summarise.
We began with the population at a non-specialised equilibrium.
We fixed the nominations and imposed a minimal change (1 unit) in the action choice of one individual.
From there, we let all players choose the optimal in every period going forward, and we noted that dynamics of this form led to a complete unravelling.
In Proposition \ref{proposition:necessary} we show that this is not an artefact of this particular example. 
Rather, for every non-specialised equilibrium, if the action choice of any agent not choosing $\qstar$ is perturbed slightly, then an unravelling of this sort is guaranteed to occur.


\end{example}




\section{Public goods in networks with sharing constraints}\label{MODEL}

In Section~\ref{sec:ourModel} we lay out the model.
Section~\ref{sec:existence} begins with an existence theorem for a pure strategy Nash equilibrium in every instance of the model.
The proof requires some novel graph-theoretic concepts (e.g., $DP$-Nash subgraphs, $D$-sets, $P$-sets).
The section concludes with some observations on how these novel concepts relate to existing ones (e.g., dominating sets and independent dominating sets). 

\subsection{The model}\label{sec:ourModel}


We begin with the graph-theoretic terminology required to describe the model.

An \emph{undirected graph} $G = (V, E)$ consists of a nonempty finite set $V = V(G)$ of elements called \emph{vertices} and a finite set $E = E(G)$ of unordered pairs of distinct vertices called \emph{edges}. We call $V(G)$ the vertex set of $G$ and $E(G)$ the edge set of $G$. In other words, an edge $\set{i, j}$ is a 2-element subset of $V(G)$. We will often denote an edge $\set{i, j}$ by $ij$. For edge $ij \in E(G)$ we say that $i$ and $j$ are the \emph{end-vertices}, and say that end-vertices are \emph{adjacent}. We say that vertex $i$ is \emph{incident} to edge $e$ if it is an end-vertex of $e$. A graph $G$ on $n$ vertices is called {\em complete} if every two distinct vertices in $G$ are adjacent; $G$ will be denoted by $K_n.$

A \emph{path} in $G$ is a finite sequence of edges which connect a sequence of distinct vertices. A graph is \emph{connected} if there is at least one path containing each pair of vertices. We define the \emph{neighbourhood} of a vertex $i$, $\neighbourhood{G}(i)$, in a graph $G$ to be the set of vertices that vertex $i$ is adjacent to, $\neighbourhood{G}(i) = \set{j \in V \, : \, ij \in E}$, and we say that vertex $j \in \neighbourhood{G}(i)$ is a \emph{neighbour} of vertex $i$; we write $\degree{G}(i)$ for the cardinality of $\neighbourhood{G}(i)$. For a connected graph with at least two vertices, the neighbourhood of every vertex is nonempty.



With the above we can now introduce the game-theoretic component of the model. In the model, vertices are interpreted as players and edges represent connections between pairs of players. We assume the population (vertex set) is of size $n$ and that the graph $G$ is connected.

Let $X = \set{0, 1, \dots, \xbar}$ denote the finite set of actions common to each agent. Actions have both a private and (local) public benefit, but only a private cost.
{Writing $\Natural_{0}$ for the set of non-negative integers, there is a capacity function $\shareNumber: V \to \Natural_{0}$ that specifies, for each player, how many of his neighbours must \emph{be nominated} as benefactors of his action choice.}\footnote{An alternative set-up would be to require that each player $i$ must nominate no more than $\shareNumber(i)$ neighbours. As mentioned in the introduction we wish the limiting case of our model, that where $\shareNumber(i) = d_G(i)$ for all $i$, to be equivalent to the model of \cite{BramoulleKranton:2007:JET} and this alternative specification would then differ. In Section~\ref{NomDYNAMICS} we briefly consider this alternative set up, and we consider unilateral deviations in nomination. We show that for a large class of graphs, that includes many considered in BK, that only when individuals nominate up to their capacity are long-run stable.}
If a player's capacity is zero then he cannot nominate any neighbours.
If a player's capacity is at least as great as his degree then he is required to nominate all of his neighbours; that is, throughout when we write that each individual $i$ nominates $\shareNumber(i)$ neighbours we really mean that individual $i$ nominates $\min\set{\shareNumber(i), d_{G}(i)}$ neighbours - we will write $\shareNumber(i)$ to simplify notation.

Formally, for any nonempty set $A$ and nonnegative integer $k$, we denote by ${A \choose k}$ the collection of $k$-subsets of $A$.
That is, ${A \choose k} = \{A\}$ when $k \geq \abs{A}$, ${A \choose k} = \set{S \subseteq A \, : \, \abs{S} = k}$ when $0 < k < \abs{A}$, and ${A \choose k} = \emptyset$ when $k= 0$. With this, player $i$'s set of pure strategies is given by $\strategySet{} \times \nominateSet{i}{(\shareNumber)}$, with $\nominateSet{i}{(\shareNumber)} ={\neighbourhood{G}(i) \choose \shareNumber(i)}$ representing the collection of subsets of $\neighbourhood{G}(i)$ of size $\shareNumber(i)$. The size of each set $\nominateSet{i}{(\shareNumber)}$ is ${\degree{G}(i) \choose \shareNumber(i)}$.


We write $\strategy{i}$ for individual $i$'s action choice from $\strategySet{}$, and $\nominate{i}$ for his nominating choice from $\nominateSet{i}{}$. (Note we omit the superscript $(\shareNumber)$ of $\nominateSet{i}{(\shareNumber)}$ whenever no confusion arises.) A pure strategy profile is represented by a vector $(\sbold, \nominateBold) = \big((\strategy{1}, \dots, \strategy{n}), (\nominate{1}, \dots, \nominate{n})\big)$ specifying an action and a set of nominees for each agent. (We call $\sbold$ the \emph{action profile} and $\nominateBold$ the \emph{nomination profile}.)\footnote{While we have assumed that the graph $G$ is exogenously given, the model permits an alternative interpretation. Specifically, suppose initially that there is no network so that nobody is connected:\ each individual $i$ is required to nominate \underline{any} $\nominate{i}$ others to whom he wants to link, and the resulting nominations induce the formation of a social network. See Appendix~\ref{APP:endogenous}.}

We define the utility to player $i$, $\utility{i}$, from strategy profile $(\sbold, \nominateBold)$, as
\begin{equation}\label{eq:utility}
\utility{i} \big( \sbold, \nominateBold \big) = f\Big(\strategy{i} + \sum_{\{ j \in \neighbourhood{G}{(i)} \,\, : \,\, i \in \nominate{j} \}} \strategy{j} \Big) - c \strategy{i}
\end{equation}
where we assume that (i) $c > 0$, and (ii) there exists a $\qstar \in X$ such that 
\begin{equation}\label{eq:assumption}
\qstar \in \argmax_{x \in X} \left( f(x) - cx \right) \hspace{.2in} \text{and} \hspace{.2in} f(x) - cx \text{ is non-increasing for } x \geq \qstar.
\end{equation}

While mentioned before, we repeat that each player $i$'s utility solely depends upon his own action choice and the action choices of his neighbours in $G$ who nominate him. Player $i$'s utility does not depend upon who he himself nominates. 

The assumption given in \eqref{eq:assumption} implies that it makes no sense for a player to increase their provision if they are already receiving $\qstar$ in total (i.e., if $\strategy{i} + \sum_{\{ j \in \neighbourhood{G}{(i)}\,\, : \,\, i \in \nominate{j} \}} \strategy{j} =y\geq \qstar$).
This is the case because for any such $y \ge \qstar$, it must be that $ f(y+t)-c(y+t)\leq f(y)-cy$ and hence $f(y+t)-ct \leq f(y)$, a fact that will be used in the proof of Theorem~\ref{theorem:balancedPSNE}.
In words this says that while an individual will always (at least weakly) benefit from more of the good when they are receiving $\qstar$, the marginal benefit of any increase beyond $\qstar$ is exceeded by the marginal cost.



The utility function as defined in \eqref{eq:utility} is general. While it does not require the concavity of $f$ as in that of \cite{BramoulleKranton:2007:JET}, this comes at the cost of requiring the action set to be discrete. If we take $X = \set{0, 1}$, $f(x) = 1$ for all $x \geq 1$, and $0 < c < 1$, then the game becomes the ``Netflix Game with $\shareNumber$-user sharing rule''.



\subsection{Existence of pure strategy Nash equilibrium}\label{sec:existence}

Our interest is in pure strategy Nash equilibria.
These are defined in the usual way:

\begin{definition}
A strategy profile $(\sboldstar, \nominateBoldStar)$ is a pure strategy Nash equilibrium if for every $i = 1, \dots, n$, and every $\strategy{i} \in \strategySet{i}$ and every $\nominate{i} \in \nominateSet{i}{}$ we have
\[ 
\utility{i}\left( (\sboldstar, \nominateBoldStar)\right) \geq \utility{i} \left((x^*_1,\ldots,x^*_{i-1},x_i ,x^*_{i+1},\ldots,x^*_n)(m^*_1,\ldots,m^*_{i-1},m_i ,m^*_{i+1},\ldots,m^*_n)\right).
\]

\end{definition}

We have the following existence result:

\begin{proposition}\label{theorem:PSNE}
A pure strategy Nash equilibrium exists for any graph $G$ and any capacity function $\shareNumber: V \to \Natural_{0}$.\footnote{Our model does not admit a potential function \citep{ShapleyMonderer:1996:GEB} which would render the existence of a pure strategy equilibrium immediate.}
\end{proposition}

We will see that Proposition~\ref{theorem:PSNE} follows from Theorem~\ref{theorem:balancedPSNE} that is proved below.	

While the above is a strong result, we will focus on what, following \cite{BramoulleKranton:2007:JET},  we term \emph{specialised strategy profiles} - those in which each agent either choose action $\strategy{i} = 0$ or $\strategy{i} = \qstar$. We have the following definition.

\begin{definition}
A specialised strategy profile is a pure strategy profile in which for all $i \in V$ we have either $\strategy{i} = 0$ or $\strategy{i} = \qstar$.
\end{definition}

For a given specialised strategy profile $(\sbold, \nominateBold)$, let $D(\sbold, \nominateBold) = \set{i \in V : \strategy{i} = \qstar}$ be those individuals who supply $\qstar$, and $P(\sbold, \nominateBold) = \set{i \in V : \strategy{i} = 0}$ be those who supply nothing.
(In equilibrium those in $D$ provide while those in $P$ free ride.)
Clearly, at any specialised strategy profile, we have that both $D \cap P = \emptyset$ and $D \cup P = V$.

Not all specialised strategy profiles support equilibrium.
There are two reasons.
First of all, everyone choosing action $0$ is a specialised strategy profile but clearly not an equilibrium.
Second, there is another subtle complication with specialised profiles that does not occur in the model of BK as in their model individuals nominate all neighbours.
That is, even a specialised strategy profile in which neighbouring individuals choose consistent quantities need not be an equilibrium as the nominations must also ``match up'' correctly.
To see this consider the equilibrium depicted in Figure \ref{fig:twoStarConnectedEquilibria}.
If action choices stayed the same but $I$ nominated $J$ instead of her friend $i_{1}$, then population behaviour is no longer an equilibrium since $i_1$ no longer has access.
So for a specialised profile to be an equilibrium we need consistency in the nominations.
In words, we wish to find a nomination profile such that nobody in $D$ is nominated by somebody else in $D$, and everyone in $P$ is nominated by at least on person in $D$.

%


To describe the sort of consistency in nominations that we require, some additional graph-theoretic terminology is required.
Given a graph $G = (V(G), E(G))$, we say that $H = (V(H), E(H)) $ is a \emph{subgraph} of $G$ if $V(H) \subseteq V(G)$, $E(H) \subseteq E(G)$, {and every edge in $E(H)$ has both end-vertices in $V(H)$}.
If $V(H) = V(G)$ we say that $H$ is a \emph{spanning subgraph} of $G$.
A subgraph $H$ is said to be {\em induced} by a subset $S$ of vertices of $G = (V, E)$, if the vertex set of $H$ is $S$ and the edge set consists of all edges in $E$ that have both end-vertices in $S$. If $G = (V, E)$ is a graph and $S\subseteq V(G)$, we write $G- S$ for the subgraph induced by $V(G)\setminus S$.

Bipartite graphs play a particularly important role in our analysis.
A {\em bipartite graph} is a graph whose vertices can be partitioned into two disjoint sets (called {\em partite sets}) $A$ and $B$ such that every edge has one end-vertex in $A$ and the other in $B$.
A bipartite graph $G$ with partite sets $A$ and $B$ is called {\em complete bipartite} if $ab\in E(G)$ for every $a\in A$ and $b\in B$.
Then $G$ is denoted by $K_{|A|,|B|},$ where $|A|$ and $|B|$ are the cardinalities of the sets $A$ and $B.$
A complete bipartite graph $K_{1,p}$ ($p\ge 2$) is called a {\em star}, the vertex adjacent to all other vertices {\em the center}, all other vertices {\em leaves}.
For example the graph in Figure~\ref{fig:5star} is a star with centre $\ell$.

Abstracting for now from the actions chosen by the individuals, we wish to find a spanning bipartite subgraph, $H$, of $G$, with partite sets $D$ and $P$ such that there does not exist an $i \in D$ such that $i \in \nominate{j}$ for any $j \in D$, and for all $i \in P$ there exists at least one $j \in D$ such that $i \in \nominate{j}$. This leads us to the following definition.

\begin{definition}\label{def:DPNash}
A spanning bipartite subgraph $H$ of $G$ with partite sets $P$ and $D$ is called a {\em $DP$-Nash subgraph} if for each $i \in D$ the degree of $i $ in $H$ is $\min\set{\shareNumber(i),\degree{G}(i)}$ and for every $i\in P$ the degree of $i$ in $H$ is positive.
\end{definition}

%


We will see that $DP$-Nash subgraphs precisely characterise the nomination component of strategy profile that generate the consistency in nominations that support specialised pure strategy Nash equilibria.

We have the following theorem, which is an improvement of Proposition \ref{theorem:PSNE}.

\begin{theorem}\label{theorem:balancedPSNE}
For any graph $G$ and common utility function as given by \eqref{eq:utility}, there exists a specialised pure strategy Nash equilibrium.
\end{theorem}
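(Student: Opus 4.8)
The statement has two parts: the \emph{existence} of a nicely balanced specialised profile, and the claim that \emph{every} such profile is a pure strategy Nash equilibrium. I would prove the equilibrium part first, since it is the cleaner direction, and then address existence. Throughout I would lean on the fact, noted just after \eqref{eq:utility}, that player $i$'s payoff does not depend on his own nomination $\nominate{i}$. Consequently no deviation that changes only $\nominate{i}$ can be profitable, and a joint deviation in $(\strategy{i},\nominate{i})$ gives the same payoff as the corresponding deviation in $\strategy{i}$ alone. Hence it suffices to check that no player can strictly gain by unilaterally changing his \emph{action} $\strategy{i}$.

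For the equilibrium part I would fix a nicely balanced specialised profile $(\sbold,\nominateBold)$ supported by a $\shareNumber$-$DP$-subgraph $H$, and for each $i$ write $R_i = \sum_{\{j \in \neighbourhood{G}(i)\,:\, i \in \nominate{j}\}} \strategy{j}$ for the quantity $i$ receives, so that $\utility{i} = f(\strategy{i}+R_i)-c\,\strategy{i}$. The two observations driving the argument are: (i) if $i\in D$ then $R_i=0$, because any $j$ with $i\in\nominate{j}$ must be a passenger --- a driver $j$ has $\nominate{j}=\neighbourhood{H}(j)\subseteq P$ by bipartiteness of $H$ and so cannot nominate the driver $i$ --- and passengers contribute $\strategy{j}=0$; and (ii) if $i\in P$ then $R_i\ge \qstar$, because $\degree{H}(i)\ge 1$ forces an $H$-neighbour, necessarily a driver $d$ with $i\in\neighbourhood{H}(d)=\nominate{d}$, contributing $\strategy{d}=\qstar$. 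Given (i), a driver earns $f(\qstar)-c\,\qstar=\max_{x\in X}(f(x)-cx)$, so no action deviation helps. Given (ii), a passenger receives $R_i\ge\qstar$ while playing $\strategy{i}=0$; raising his action to $t>0$ yields $f(R_i+t)-ct$, and the displayed consequence of the $\qstar$-assumption, namely $f(y+t)-ct\le f(y)$ for $y\ge\qstar$, shows this is no improvement. This settles the second part.

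For existence I would reduce to two steps: first produce a $\shareNumber$-$DP$-subgraph $H$ of $G$, then extend it to a nicely balanced profile by choosing the passengers' nominations. The second step is routine: every passenger $i$ has some driver $d\in\neighbourhood{H}(i)$, and provided $\shareNumber(i)\ge 1$ we may take $\nominate{i}$ to be any $\min\{\shareNumber(i),\degree{G}(i)\}$-subset of $\neighbourhood{G}(i)$ containing $d$, so that $\nominate{i}\cap\neighbourhood{H}(i)\neq\emptyset$ as required. To guarantee that every passenger has positive capacity I would place all capacity-$0$ vertices into $D$ when building $H$; such a vertex satisfies the driver degree condition trivially because $\min\{0,\degree{G}(i)\}=0$.

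The substance of the proof is therefore the construction of the $\shareNumber$-$DP$-subgraph, and this is the step I expect to be the main obstacle. The difficulty is the \emph{exact} degree requirement on drivers: each $i\in D$ must nominate precisely $\min\{\shareNumber(i),\degree{G}(i)\}$ neighbours, \emph{all} passengers, while simultaneously every passenger is nominated by at least one driver. A naive greedy rule that promotes uncovered vertices to drivers can fail --- on a star whose centre has capacity one below its degree, promoting the centre first strands a leaf whose only neighbour is the driver centre. I would instead argue by an extremal/augmenting construction: among all pairs $(D,\Phi)$ in which $D\subseteq V$ admits, for every $d\in D$, a set $\Phi(d)\subseteq\neighbourhood{G}(d)\setminus D$ of exactly $\min\{\shareNumber(d),\degree{G}(d)\}$ non-driver neighbours, pick one minimising the number of uncovered vertices. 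If a vertex $u$ remains uncovered, I would show it can be removed: either $u$ has enough non-driver neighbours to become a driver itself (strictly lowering the uncovered count), or $u$ is surrounded by drivers whose slots are full, in which case an alternating/augmenting-path reassignment of nominations frees a slot to cover $u$ without creating a new uncovered vertex. The crux is to verify that this reassignment always terminates and never breaks the exact-degree constraint, i.e.\ that a maximally covering configuration leaves nothing uncovered; since each step is a local polynomial modification, the same argument yields the polynomial-time algorithm promised in the introduction. Combining the resulting $H$ with the nomination extension above produces a nicely balanced specialised profile, completing the proof.
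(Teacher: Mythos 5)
Your second half --- that every nicely balanced specialised profile is a Nash equilibrium --- is correct and is essentially the paper's own argument: nominations are payoff-irrelevant to the nominator; a driver $i$ receives $R_i=0$ because anyone nominating a driver must be a passenger (drivers nominate only $\neighbourhood{H}(i)\subseteq P$ by bipartiteness of $H$), so he earns the maximum $f(\qstar)-c\qstar$; and a passenger receives $R_i\geq \qstar$, so the satiation property $f(y+t)-ct\leq f(y)$ for $y\geq \qstar$ rules out any profitable increase. Your observation that capacity-$0$ vertices should be placed in $D$ is a genuinely useful refinement that the paper glosses over. But this is the easy half of the theorem.

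The gap is in the existence of a $\shareNumber$-$DP$-subgraph, which is the real combinatorial content, and the very example you raise against the greedy rule defeats your own scheme. Take the star $K_{1,p}$ with centre $c$, $\shareNumber(c)=p-1$, leaves of capacity $1$, and the configuration $D=\set{c}$ with $\Phi(c)$ equal to some $p-1$ leaves, leaving one leaf $u$ uncovered. Neither branch of your claimed dichotomy applies: $u$ cannot be promoted (its only neighbour is the driver $c$), and \emph{no} reassignment of $c$'s nominations can cover $u$ without uncovering another leaf, since every leaf is covered only by $c$ and $c$ has only $p-1$ slots for $p$ leaves. The unique valid $D$-set here is the set of all leaves, so escaping this configuration requires \emph{demoting} $c$ and promoting every leaf --- a global restructuring that your move set (promotion of uncovered vertices, alternating reassignment of nominations) does not contain. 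Your case analysis nowhere uses minimality of $(D,\Phi)$, so it must hold at arbitrary configurations, and this example shows it does not; even if demotions were added, you would still owe the termination/progress argument that you yourself flag as ``the crux'' but never supply. The paper's proof is entirely different and sidesteps all of this by induction on $n$: if some vertex $i$ has $\degree{G}(i)\leq\shareNumber(i)$, make $i$ a driver nominating all of $\neighbourhood{G}(i)$, delete the closed neighbourhood, recurse on $G'$, and then patch each driver $j$ of $G'$ whose degree dropped by adding exactly $\min\set{\shareNumber(j),\degree{G}(j)}-\degree{G'}(j)$ edges back into $N(i)$ (these exist because all edges $j$ lost go to $N(i)$, and they preserve bipartiteness). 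If instead every vertex satisfies $\degree{G}(i)>\shareNumber(i)$, delete $\degree{G}(i)-\shareNumber(i)$ edges at an arbitrary vertex; no requirement $\min\set{\shareNumber,\degree{G}}$ changes, and the first case now applies. That induction both closes the existence argument and yields the polynomial-time algorithm promised in the introduction; your sketch, as written, delivers neither.
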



\begin{proof}\label{proof:theoremBalancedPSNE}
The proof has two parts. The first is to show that every graph $G$ possesses at least one {\em $DP$-Nash subgraph}. The second is then to show that a specialised profile induced by $DP$-Nash subgraph involves each agent choosing the optimal action. 

Here we prove the following claim:\ If $G$ is a graph and $\shareNumber:\ V(G) \rightarrow \Natural_0$ is a function, then $G$ has a $DP$-Nash subgraph.

The proof proceeds by induction on the number $n$ of vertices of $G.$ If $n=1$, then $G$ is a DP-Nash subgraph with  $D=V(G)$ and $P=\emptyset$.

Now assume the claim is true for all graphs with fewer than $n_{0}\geq 2$ vertices, and let $G$ be a graph on $n_{0}$ vertices.

\begin{description}
\item[ Case 1: There is a vertex $i$ of degree at most $\shareNumber(i).$] 

Let $B$ be  the star with center $i$ and leaves $N_{G}(i),$ where $N_{G}(i)$ is the neighbourhood of $i$ in $G$. Let $G'=G-V(B)$. Set $D=\{i\}$ and $P=N_{G}(i).$ If $G'$ has no vertices then $B$ is clearly a $DP$-Nash subgraph of $G.$

Otherwise, by induction hypothesis, $G'$ has a $DP$-Nash subgraph $H'$ with partite sets $P'$ and $D'$. Construct a subgraph $H$ of $G$ from the disjoint union of $H'$ and $B$ by adding to it for every $j\in D'$ with $\degree{G'}(j)<\shareNumber(j),$ exactly $\min\{\degree{G}(j),\shareNumber(j)\}-\degree{G'}(j)$ edges of $G$ between $j$ and $N(i).$ Set $D=D'\cup \{i\} $ and $P=P'\cup N(i).$

To see that $H$ is a $DP$-Nash subgraph of $G,$ observe that (a) $H$ is a spanning bipartite subgraph of $G$ as $H'$ and $B$ are bipartite and the added edges are between $D$ and $P$ only, (b) every vertex $j\in D$ has degree in $H$ equal to $\min\{\shareNumber(j), \degree{G}(j)\},$ (c) every vertex $k \in P$ is of positive degree (since it is so in both $B$ and $H'$). 

\item[ Case 2: For every vertex $j\in V(G)$, $\degree{G}(j)>\shareNumber(j)$.]  Let $i$ be an arbitrary vertex. Delete $\degree{G}(i)-\shareNumber(i)$ edges incident to $i$ and denote the resulting graph by $L$. Observe that every $DP$-Nash subgraph of $L$ is a $DP$-Nash subgraph of $G$ since no vertex in $L$ has degree less than $\shareNumber(i).$ This reduces Case 2 to Case 1.
\end{description}

To complete the proof it remains to show that a specialised profile induced by a $DP$-Nash subgraph is a Nash equilibrium.
Let  $(\sboldstar, \nominateBoldStar)$ be a such a specialised profile.
As mentioned before, we note that the utility function defined in \eqref{eq:utility} does not depend on $\nominate{i}$. Thus player $i$ cannot increase her payoff by deviating from $\nominate{i}^{*}$. If $i \in D$ then by the definition  of  a nicely specialised profile induced by a $DP$-Nash subgraph, we have $\strategy{i}^{*} = \qstar$ and $ \sum_{\{ j \in \neighbourhood{G}(i) : \,\, i \in \nominate{j}^{*} \}} \strategy{j}^{*}=0$. Thus is follows from the definition of the utility of $i$ that for all $\sbold$ that are obtained from $\sboldstar$ by replacing $\sboldstar_i=\qstar$ by any $t>0$, we have 
\[ U_i(\sboldstar,\nominateBoldStar)=f(\qstar)-c\qstar\geq f(t)-ct=U_i(\sbold,\nominateBoldStar).\]

If $i\in P$ then $\strategy{i}^{*} = 0$ and $ \sum_{\{ j \in \neighbourhood{G}(i) : \,\, i \in \nominate{j}^{*} \}} \strategy{j}^{*}=s \qstar$ for some $s\geq1$. 
Thus by the observation just after the definition of the utility function,  for all $\sbold$ that are obtained from $\sboldstar$ by replacing $\sboldstar_i=0$ by any $t>0$ we calculate
\[ U_i(\sboldstar,\nominateBoldStar)=f(s\qstar)\geq f(s\qstar+t)-ct=U_i(\sbold,\nominateBoldStar).\]
\end{proof}

We now make some observations about Theorem \ref{theorem:balancedPSNE} and more generally about $DP$-Nash subgraphs and their associated $D$-sets and $P$-sets.

Fix a graph $G = (V, E)$.
A \emph{dominating set} is a set of vertices such that every vertex in $V$ is either in the set or has a neighbour in the set, and a \emph{minimal dominating set} is a dominating set that does not contain a proper subset that is dominating.
(Note that the notion of dominating is defined only for sets whereas our concept of nominating is defined for individual vertices and by considering multiple nominating vertices can be extended to sets.)
An \emph{independent set} is a subset of vertices no pair of which are adjacent, and a \emph{maximal independent set} is an independent set that is not a proper subset of any other.
A maximal independent set must be a dominating set and so is also referred to as an independent dominating set.
We now relate $D$-sets to independent dominating sets and minimal dominating sets.
For a further discussion of some of the properties of $D$-sets, refer to Appendix \ref{App:DSets}.

Clearly a $D$-set is a dominating set though the reverse need {not} hold. An example of this is a complete graph on $n$ vertices, $K_n$, wherein every vertex forms a dominating set. But if $\kappa(i)<d(i)$ for every $i\in V(K_n),$ then no singleton can be a $D$-set.

We add further observations.
First, as with dominating sets but not independent sets, it is possible that two vertices in a given $D$-set are adjacent in $G$.
An instance of this was seen in Example \ref{ex:twoStars} of Section \ref{EXAMPLE} for the equilibrium in which the two central vertices made up the $D$-set.
Second, when $\shareNumber(i) \geq \degree{G}(i)$ for all $i$, we have that a $D$-set is an independent dominating set. 
Third, and related to the previous observation, is that unlike independent dominating sets, for a given graph $G$ and capacity function $\shareNumber$, one $D$-set may be a strict subset of another.
When this occurs, for the binary action Netflix Game it is possible to Pareto-rank the two corresponding specialised equilibria (since the equilibrium supported by the smaller $D$-set has a strict subset of the individuals purchasing). 
As an example consider a complete graph with 5 vertices $i, j, k, l$, and $m$, with $\shareNumber = 2$ for each vertex. 
One such $D$-set is $\set{i, j, k}$ with each nominating $l$ and $m$, while another $D$-set is $\set{i, j}$ with $i$ nominating $k$ and $l$ and $j$ nominating $l$ and $m$.
Clearly $\set{i, j} \subset \set{i, j, k}$.



We conclude this section by highlighting two issues related to computational complexity.\footnote{For a comprehensive presentation of computational complexity see \cite{GareyJohnson}.}
First, given its constructive nature, the proof of Theorem~\ref{theorem:balancedPSNE} suggests an algorithm that will always find a specialised equilibrium in polynomial time. 
Note, however, that there are instances of the model for which a procedure based on the proof of Theorem~\ref{theorem:balancedPSNE} will not find all specialised equilibria.
For example, the specialised equilibrium of Example~\ref{ex:twoStars} at which $\shareNumber(I) = \shareNumber(J) = 3$, that is depicted in Figure~\ref{fig:twoStarConnectedEquilibria}, would not be found by the procedure.

Second, unlike the best-shot game and the BK model of public goods, instances of our more general model can possess a unique specialised equilibrium.
Example \ref{ex:twoStars} in Section \ref{EXAMPLE} provided such instances when $\shareNumber$ is equal to 1 or 2.
This is because a graph can have a unique $D$-set but always has at least two maximal independent sets.
The issue of uniqueness generates a surprising dichotomy in computational complexity between specialised equilibrium and specialised equilibrium outcomes.
This is in part because specialised equilibria are characterised by $DP$-Nash subgraphs whereas specialised equilibrium outcomes are characterised by $D$-sets.
\cite{GutinNeary:2023:DAM} show that the problem {\sc $DP$-Nash Subgraph Uniqueness}:\ {\it decide whether a capacitated graph has a unique $DP$-Nash subgraph}, can be decided in polynomial time.\footnote{The problem of finding a minimum dominating set weakly satisfying the capacity constraints exists already, and is referred to as the {\sc Capacitated Domination Problem} \citep{Guha:2003vc,CyganPW11,KaoCL15}. Capacitated domination is itself a special (discrete) variant of the well-studied {\sc Facility Location Problem} (see \href{https://en.wikipedia.org/wiki/Facility_location_problem}{https://en.wikipedia.org/wiki/Facility\_location\_problem}). So our model generates a new graph-theoretic problem that can be viewed as the {\sc Exact Capacitated Domination Problem}.}
However, the nearby problem {\sc $D$-set Uniqueness}:\ {\it decide whether a capacitated graph has a unique $D$-set}, is {\sf co-NP}-complete.
{That is, deciding whether an instance of our model has a unique specialised equilibrium can be checked quickly, but deciding whether an instance of our model has a unique specialised equilibrium outcome cannot.}






\section{Efficiency and comparative statics}\label{COMPARATIVESTATICS}


In this section we focus on the efficiency of specialised equilibria.
There are often multiple specialised profiles and so there may be interest in which equilibrium is optimal according to some metric.
If, for example the function $f(\cdot)$ in equation~\eqref{eq:utility} is strictly increasing, it is immediate that all specialised equilibria are Pareto efficient.
But even in such cases there are other measures of efficiency that can be considered.
The measure of efficiency/inefficiency that we will focus on are those with the smallest and largest $D$-sets - certainly these are the specialised equilibria that are least/most costly.\footnote{An alternative measure would be to adopt a utilitarian social welfare criterion - we will touch on this briefly at the end of this section.}

Our approach will be to consider how incremental amendments to the model affect the set of (in)efficient specialised equilibria.
One natural way to do this is to incrementally increase the capacities of the individuals.



Recall that for a specialised equilibrium profile $(\sboldstar, \nominateBoldStar)$, we write $D(\sboldstar, \nominateBoldStar)$ to denote the set of individuals who adopt.
We have the following definition.

\begin{definition}\label{def:efficient}
A specialised equilibrium $(\sboldstar, \nominateBoldStar)$ is said to be {\it efficient} ({\it inefficient}) if its associated $D$-set, $D(\sboldstar, \nominateBoldStar)$, is of minimal (maximal) size.
\end{definition}

For a graph $G$ and capacity function $\shareNumber : V \to  \Natural_{0}$, let $\delta_{\min}^{\shareNumber}(G)$ and $\delta_{\max}^{\shareNumber}(G)$ denote the minimum and maximum sizes of $D$-sets of $G$.
Our focus is then on how, for a given graph $G$, the two sets $\delta_{\min}^{\shareNumber}(G)$ and $\delta_{\max}^{\shareNumber}(G)$ vary as the sharing capacity increases.
To this end, let $\shareNumber':\ V(G) \rightarrow \mathbb{N}_0$ be a function such that $\shareNumber(i)\le \shareNumber'(i)$ for every $i\in V(G)$.
We compare $\delta^{\shareNumber}_{\min}(G)$ and $\delta^{\shareNumber}_{\max}(G)$ with $\delta^{\shareNumber'}_{\min}(G)$ and $\delta^{\shareNumber'}_{\max}(G)$.

Theorem \ref{thm:delta} shows a particular inequality holds for every graph $G$.\footnote{It can be shown that none of the other three possible inequalities hold. We will provide counterexamples to each in the discussion following the theorem.}
In words the theorem says the following.
Fix a graph, $G$, and fix two capacity functions $\shareNumber$ and $\shareNumber'$ such that every vertex has at least as much capacity under $\shareNumber'$ as under $\shareNumber$. 
Then, the smallest $D$-set for $\shareNumber'$ is bigger than the largest $D$-set for $\shareNumber$.
That is, the most efficient equilibrium for $\shareNumber'$ is always more efficient than the least efficient equilibrium under $\shareNumber$.


\begin{theorem}\label{thm:delta}
For every graph $G$, and any two capacity functions $\kappa$ and $\kappa'$ such that $\kappa(i) \leq \kappa'(i)$ for every $i \in V(G)$, we have that $\delta^{\shareNumber'}_{\min}(G)\le \delta^\shareNumber_{\max}(G)$.
\end{theorem}

\begin{proof}
Let $\shareNumber^+: V(G)\rightarrow \Natural_0$ be function such that $\shareNumber^+(i)=\shareNumber(i)$ for $i\in V(G)\setminus\set{j}$ and $\shareNumber^+(j)=\shareNumber(j)+1$ for some $j\in V(G).$
To prove the theorem is sufficient to show that $\delta^{\shareNumber^+}_{\min}(G)\le \delta^\shareNumber_{\max}(G)$.

We proceed by induction on  $n+m,$ where $n$ is the number of vertices of $G$ and $m$ is the number of edges in $G$.  If $n+m=1$, then $G$ consists of a single vertex and setting $D=V(G)$ and $P=\emptyset$ gives the {only} $DP$-Nash subgraph {for both $\shareNumber$ and $\shareNumber^+$}. We may assume that $G$ is connected as otherwise we can consider its components and apply the  induction hypothesis on the component containing $j$ and the vertices in the other components have the same values for $\shareNumber$ and $\shareNumber^+$.  Let $G=K_{1,n-1},$ where $n\ge 2$ and $j$ is the center of the star. If $\shareNumber^+(j)\ge n-1$, then $\delta^{\shareNumber^+}_{\min}(G)=1$ and we are done. Otherwise, $V(G)\setminus\set{j}$ is a $D$-set for both $\shareNumber$ and $\shareNumber^+$. 

Now we may assume that $n\ge 3$, $G$ is connected and there is an edge in $G$ which is not incident to {$j$}. Consider two cases. 

\begin{description}
\item[ Case 1: There is a vertex $i\in V(G)\setminus\set{j}$ of degree at most $\shareNumber(i).$] 
Let $B$ be the star with center $i$ and leaves $N_{G}(i),$ where
$N_{G}(i)$ is the neighbourhood of $i$ in $G$. Let $G'=G-V(B)$. Set $D=\{i\}$ and $P=N(i).$  If $G'$ has no vertices then $B$ is clearly a $DP$-Nash subgraph of $G$ for $\kappa$, and a $DP$-Nash subgraph of $G$ for $\shareNumber^+$.

Otherwise, by induction hypothesis, $\delta^{k^+}_{\min}(G')\le \delta^k_{\max}(G')$, where $k$ is $\shareNumber$ restricted to $G'.$ The two corresponding $DP$-subgraphs of $G'$ can be extended to those of $G$ by adding {$i$} to their $D$-sets and $N(i)$ to their $P$-sets 
and adding to every $\ell\in D$ with $d_{G'}(\ell)<\shareNumber(\ell),$ and  $d_{G'}(\ell)<\shareNumber^+(\ell),$ respectively, exactly $\min\{d_G(\ell),\shareNumber(\ell)\}-d_{G'}(\ell)$ or exactly $\min\{d_G(\ell),\shareNumber^+(\ell)\}-d_{G'}(\ell)$ edges of $G$ between $\ell$ and $N(i).$ Thus, 
$$\delta^{\shareNumber^+}_{\min}(G)\le \delta^{k^+}_{\min}(G')+1\le \delta^{k}_{\max}(G')+1\le \delta^{\shareNumber}_{\max}(G).$$


\item[ Case 2: The degree of every vertex $i\in V(G)\setminus\set{j}$ is larger than $\shareNumber(i).$] 
Choose any edge $i\ell$ such that $j\not\in \{i,\ell\}$ and delete this edge from $G$.
By the induction hypothesis, the resulting graph $G'$ has $\delta^{k^+}_{\min}(G')\le \delta^k_{\max}(G')$.
It remains to observe that the two $DP$-Nash subgraphs of $G'$ are also $DP$-Nash subgraphs of $G$ (the functions $\shareNumber$ and $\shareNumber^+$ were not changed and the deleted edges are not needed). 
\end{description}
\vspace{-1cm}
\end{proof}


It is easy to find counter examples to the other three potential inequalities. To see that $\delta_{\max}^{\shareNumber^+}(G)$ and $\delta_{\max}^\shareNumber(G)$ cannot be ordered we give examples of the inequalities in both directions. Consider a star $K_{1,5}$ and add an edge between any two non-central vertices, say $i$ and $j$, and let $\shareNumber(\ell)=5$ for the center $\ell$, and all other vertices $h$ have $\shareNumber(h)=1$.
Then the maximum $D$-set has size $5$. But if $\shareNumber(i)$ is increased then the maximum $D$-set has size $4$. On the other hand, consider the complete graph $K_7$ on $7$ vertices. Pick two vertices $i$ and $j$ and let $\shareNumber(i)=\shareNumber(j)=2$ and $\shareNumber(h)=6$ for all other vertices $h$. Then all $DP$-Nash graphs have $D$-sets of size $1$ (and all vertices except $i$ and $j$ can form $D$) and thus $\delta_{\max}^\shareNumber(G)=1$. But if we increase $\shareNumber(i)$ by one to obtain $\shareNumber^+$ then $\set{i, j}$ form a $D$-set and $\delta_{\max}^{\shareNumber^+}(G)=2$.

If we are interested in comparing $\delta_{\min}^{\shareNumber^+}(G)$ and $\delta_{\min}^\shareNumber(G)$ then we find a similar situation.  Consider the graph in Example \ref{ex:twoStars}. When $\shareNumber(i)=2$ for all vertices then we have seen that the minimum $D$-set has size $6$, but if we increase the $\shareNumber(I)$ and $\shareNumber(J)$ to 3, then the new minimum $D$-set has size $2$. If we further increase $\shareNumber(I)$ (or $\shareNumber(J)$) then the new minimum $D$-set has size $4$.

We conclude this section with brief discussion of an alternative measure of efficiency.
While the measure of efficiency we have employed is that of ``smallest $D$-set'', one might also wish to consider other welfare criterions like, for example, the utilitarian social welfare criterion.
We will now show that such a welfare criterion depends on the precise formulation of the payoff functions.
In particular, it depends on the structure of $f(\cdot)$ and value of $c$ in equation~\eqref{eq:utility}.

To see why, consider a society organised as per the network of Example~\ref{ex:twoStars}.
Consider the specialised equilibrium where both $I$ and $J$ each contribute $\qstar$ (depicted in Figure~\ref{fig:twoStarConnectedEquilibria}) and the specialised equilibrium where everyone but $I$ and $J$ contributes $\qstar$. 
In the first equilibrium the sum of utilities is $8f(\qstar) - 2c\qstar$, whereas in the second equilibrium the sum is $6f(\qstar) + 2f(3\qstar) - 6c\qstar$.
Which of these two sums is larger will depend on the precise structure of $f(\cdot)$ and the size of $c$.






\section{Stability of specialised profiles}\label{DYNAMICS}



In this section we consider dynamics.
Our goal is to examine which strategy profiles are robust to unilateral deviations.
In our model individuals make a two-pronged decision - a choice of action and a choice of nomination.
We consider unilateral deviations in both.

Studying dynamics is complicated due to the richness of our model.
In particular, since each individual's utility is unaffected by their choice of nomination and the number of ways they can nominate can be enormous, some simplifications must be made.\footnote{Recall that the number of nominations for individual $i$ with capacity $\shareNumber(i)$ is ${\degree{G}(i) \choose \shareNumber(i)}$.}
We begin, in Section~\ref{ActDYNAMICS}, by assuming that the nomination profile is fixed. We impose a unilateral change in one person's action choice. From there, with nominations held fixed, we ``let the system go'' assuming best-action reply dynamics.
In Section~\ref{NomDYNAMICS}, we suppose a deviation in the nomination of some player so that a new nomination profile is reached. That new nomination profile is then held fixed and the best-action reply dynamics are assumed.\footnote{Formally this sort of deviation considers a different model wherein each player need not nominate exactly $\shareNumber(i)$ neighbours, but rather {\it no more than} $\shareNumber(i)$ neighbours. More details in Section~\ref{NomDYNAMICS}.}


One take-away message is shared by both sorts of deviations.
{While non-specialised equilibria can be interpreted as equilibria wherein there is some cooperation / coordination amongst the individuals of a given society, equilibria of this form are not stable.}
That is, specialised equilibria appear the only serious candidates for long run behaviour.




\subsection{Action-stability of specialised profiles}\label{ActDYNAMICS}

With respect to this dynamic based on \emph{best-action responses}, Proposition \ref{proposition:necessary} below shows that only specialised equilibria can be stable.
Theorem \ref{theorem:sufficient} below shows that specialised equilibria are stable only when every individual in $P$ is nominated by at least two individuals in $D$.
This last result is the analog of Theorem 2 in \cite{BramoulleKranton:2007:JET}), and we emphasise that it does not hold for all networks.
The reason being that there exist networks that do not possess a $DP$-Nash subgraph satisfying the density condition.

Given a nomination profile $\nominateBold$ and the action profile $\sbold$, it is not hard to see that the best-action response of agent $i$, $\bestResponse{i, \nominateBold}{\sbold}$, is given by
\begin{equation}\label{eqn:bestResponse}
\bestResponse{i, \nominateBold}{\sbold} = \max \set{ \qstar - \sum_{\set{j \in \neighbourhood{i}(G) \, : \, i \in \nominate{j}}} \strategy{j}, \,\,\, 0}
\end{equation}
We extend this to the best-action reply dynamic $\Bcal_{\nominateBold} : \Sbold \to \Sbold$ as
\begin{equation}\label{eqn:bestReply}
\bestReply{\nominateBold}{\sbold} = \Big(\bestResponse{1, \nominateBold}{\sbold}, \bestResponse{2, \nominateBold}{\sbold},  \dots, \bestResponse{n, \nominateBold}{\sbold} \Big)
\end{equation}

We assume time, $t$, is discrete and starts at $t= 0$. The interpretation will be that we set the action profile to $\sbold$ at time $t=0$ and then `let the system go' with it evolving according to \eqref{eqn:bestReply} above.

\begin{definition}\label{def:brEvolution}
Given an action profile $\sbold$ we define the \emph{best-action evolution of $\sbold$} recursively by 
$\sbold^{(0)}=\sbold$, and for all $t\geq 1$, $\sbold^{(t)} = \bestReply{m}{\sbold^{t-1}}$.
\end{definition}

Lemma \ref{lemma:nashSequencePopulation} in Appendix \ref{App:Lemmas} says the following. Consider a pure strategy Nash equilibrium, $(\sboldstar, \nominateBoldStar)$. Now, fix the nomination component, $\nominateBoldStar$, and weakly lower everyone's action choice so that behaviour changes from action profile $\sboldstar$ to some new action profile $\sbold \leq \sboldstar$. Then, the best-action evolution of $\sbold$ either oscillates around $\sboldstar$ forever, or there exists an $t_0$ such that for all $t\geq t_0$,  $\sbold^{(t)}=\sboldstar$.\footnote{We emphasise that Lemma \ref{lemma:nashSequencePopulation} holds for all Nash equilibrium strategy profiles and not simply specialised ones.} This motivates the following definition.


\begin{definition}
We say that the best-action evolution of $\sbold$ {\it settles} in $\sboldstar$ if there exists an $t_0$ such that for all $t\geq t_0$,  we have $\sbold^{(t)}=\sboldstar$. 
\end{definition}

We now introduce our notion of stability when the nominating profile is held fixed and actions are updated according to the best-action reply dynamic. In words, we say that action profile $\sbold$ is stable relative to the nomination profile $\nominateBold$, if, when the action of any individual is changed by some strictly positive amount, repeated application of the best-action reply dynamic will lead population behaviour back to action profile $\sbold$.
\begin{definition}\label{definition:stability}
We say that strategy profile $(\sbold, \nominateBold)$ is \emph{action stable} if there exists $\delta \geq 1$ such that for any individual $i=1,\dots, n$ and any action profile $\sbold'$  with $x_j=x'_j$ for $j\not= i$ and  $|\strategy{i}'-\strategy{i}|\leq \delta$
the best action evolution of $\sbold'$ settles in $\sbold$.
\end{definition}
In words, a strategy profile is action stable if one can change any one action coordinate by at most $\delta$ and the best-action evolution will settle in $\sbold$. 
Proposition \ref{proposition:necessary} below shows that any pure strategy Nash equilibrium profile that is not specialised is not action stable and thus an equilibrium being specialised is necessary for stability.

\begin{proposition}\label{proposition:necessary}
Suppose that $(\sboldstar, \nominateBoldStar)$ is a pure strategy Nash equilibrium such that $0 < \strategy{\ell}^{*} < \qstar$ for some $\ell$. Then $(\sboldstar, \nominateBoldStar)$ is not action-stable.
\end{proposition}

Proposition \ref{proposition:necessary} says that specialised profiles supported by $DP$-Nash subgraphs are necessary for stability.
However, we will now show that such specialised equilibria are not sufficient for stability.
The issue with such specialised equilibria is that they are agnostic on who those in $P$ nominate.
We will show that specialised equilibria in which additional constraints are placed on who those in $P$ nominate are sufficient for stability.

We begin by supposing that a specialised profile $(\sbold, \nominateBold)$ supported by $DP$-Nash subgraph $H$ of $G$ for every individual $i \in P$ we have $\nominate{i} \cap \neighbourhood{H}(i) \neq \emptyset$.
That is, we require that those in $P$ nominate at least one individual who nominates them.
One might conjecture that such a mirroring in nominations would ensure stability.
But Proposition \ref{proposition:notSufficient} below, the proof of which is found in Appendix~\ref{App:Lemmas}, shows that the additional property described above is still not sufficient for stability. 
The reason is that the specialised equilibrium in which $\nominate{i} \cap \neighbourhood{H}(i) \neq \emptyset$ for all $i \in P$ may still allow that some individual in $P$ is nominated by only one individual from $D$, and this is enough to mean that the specialised profile is not action stable.

%

\begin{proposition}\label{proposition:notSufficient}
Suppose $(\sboldstar, \nominateBoldStar)$ is a specialised Nash equilibrium induced by $DP$-Nash subgraph $H$ of $G$, where (i) $\nominate{i} \cap \neighbourhood{H}(i) \neq \emptyset$ for all $i \in P$, and (ii) $\degree{H}(i) = 1$ for some $i \in P$.
Then $(\sboldstar, \nominateBoldStar)$ is not action-stable.
\end{proposition}

While Proposition~\ref{proposition:notSufficient} is negative, Theorem \ref{theorem:sufficient} shows that if all individuals in $P$ nominate only individuals in $P$ then the specialised equilibrium is action stable.
Moreover, if all individuals in $P$ are nominated by at least two individuals from $D$ (since $2 \geq \frac{q^*}{q^*-1}$ for all $q^* > 1$), then the specialised equilibrium is action-stable.

\begin{theorem}\label{theorem:sufficient}
	Suppose that $(\sboldstar, \nominateBoldStar)$ is a specialized profile induced by $DP$-Nash subgraph $H$ of $G$ and suppose that either 
	\[
      (i) \quad		m_i^* \cap N_H(i) = \emptyset  \text{ for all }i \in P  \text{ or (ii) }\quad  \min_{ i \in P}  |N_H(i)| \geq \frac{q^*}{q^*-1} 
	\]
holds. Then $(\sboldstar, \nominateBoldStar)$ is action-stable.
\end{theorem}


\begin{proof}
Suppose that condition (i) holds.  Since $H$ bipartite, if $x_i^*=q^*$, then $m_i \subset P$. Thus for $i \in D$, we have
\[
	\text{ for all } l \in P, \,\, m_l \not \ni i, \qquad \text{ for all } l \neq i \in D, \,\, m_l=N_H(l) \not \ni i
\]
Thus for all $l  \neq i\in G$, $m_l \not \ni i$ and hence no vertex in $D$ is nominated by any neighbour. Suppose that for $i \in D$, we have
$x^{(0)} = (x_i^*\pm 1, x_{-i}^*)$ or for $i \in P$, $x^{(0)} = (x_i^*+ 1, x_{-i}^*)$. Then for all $i \in D$, $x^{(t)}_{i}=q^*$ for $t \geq 1$. Given this, for all $i \in P$, $x^{(t)}_{i}=0$ for all $t\geq 2$. Thus  $(\sboldstar, \nominateBoldStar)$ is stable.

Suppose that condition (ii) holds. Choose $\varepsilon = 1$. We must consider deviations by those in $D$ and those in $P$. We begin with those in $D$.
\begin{description}
\item[Case 1: $i \in D.$] By definition $\strategy{i}^{*} = \qstar$. There are two subcases: Either we increase $x_j^*$ by one or we decrease $x_j^*$ by one. 
In the first case, i.e.\,  $\sbold$ satisfies $x_i^{(0)}=x^*_i+1=\qstar+1$ and for all $j\not=i$,  $x_j^{(0)}=x_j^*$,  
it is immediate that $\bestResponse{\nominateBoldStar}{\sbold}=\sboldstar$. 

In the second case let $\sbold$ satisfy $x_i^{(0)}=x^*_i-1=\qstar-1$ and for all $j\not=i$,  $x_j^{(0)}=x_j^*$. By our assumption for all $j\in P$ there is at least one neighbour $k\not=i$, $k\in D$ with $j\in m^*_k$ and thus  $\max \set{ \qstar - \sum_{\set{k \in \neighbourhood{j}(G) \, : \, j \in \nominate{k}}} \strategy{k}, \,\,\, 0}=0$. It follows that $x_j^{(1)}=0$ for all $j\in P$. For all $j\in D$ we have $x_j^{(1)}=\qstar$ by Lemma~\ref{lemma:nashSequencePopulation} (or one can just observe only vertices in $k\in P$ are nominating $j$ and all these vertices satisfy $x_k^{(0)}=0$).

\item[Case 2: $i \in P.$] By definition $\strategy{i}^{*} = 0$, so the only deviation is to choose $\strategy{i}^{(0)} = 1$ and for all $j\not=i$,  $x_j^{(0)}=x_j^*$. Then
\[
\bestResponse{\ell, \nominateBoldStar}{\sbold^{(0)}} = \left\lbrace
  \begin{array}{l l}
    0, & 0 \text{ if } \ell \in P,\\
    \qstar - 1 > 0, & \ell \in \neighbourhood{H}(i) \cap D,\\
    \qstar, & \ell \in D \setminus \neighbourhood{H}(i)
  \end{array}
\right.
\]
Consider an individual $h \in P$ with $h \neq i$, and consider how all of $H$'s neighbours in the nominating network will behave in period 1. We have
\begin{align*}
\sum_{\set{j \in \neighbourhood{G}(h) \, : \, h \in \nominate{j}}} \strategy{j}^{(1)} &= \sum_{\set{j \in \neighbourhood{G}(h) \, : \, h \in \nominate{j}}} \bestReply{j, \nominateBoldStar}{\sbold^{(0)}}\\
&\geq \degree{H}(h) (\qstar - 1)\\
& \geq 2(\qstar - 1)\\
&\geq \qstar
\end{align*}
where the last inequality follows since $\qstar \geq 2$. And thus, $\bestReply{h, \nominateBoldStar}{\sbold^{(0)}} = 0$.\\
Furthermore, Since $H$ is a $DP$-Nash subgraph, we have that $\neighbourhood{H}(i) \subseteq P$ for all $i \in D$, and thus the set $\set{j \in \neighbourhood{G}(i) \, : \, i \in  \nominate{j}} \subseteq P$ for all $i$ in $D$. Thus, for all $\ell \in D$, we must have that $\Bcal^{2}_{\ell, \nominateBoldStar}(\sbold^{(0)}) = \qstar$ since $\bestReply{i, \nominateBoldStar}{\sbold^{(0)}} = 0$ for all $i \in P$.\\
Therefore, $(\sboldstar, \nominateBoldStar)$ is stable. 
\end{description}
\vspace{-1cm}
\end{proof}

Theorem \ref{theorem:sufficient} provides two sufficient conditions for strategy profile $(\sboldstar, \nominateBoldStar)$ to be action-stable.
The first of these requires that every non-specialist nominates only fellow non-specialists. Such profiles are at one extreme end of the nominating spectrum in that no individual in $P$ may nominate any individual in $D$ who nominated them.
To give an example of such an equilibrium that is action-stable, consider a three person environment where the social network is given by a complete graph, and two individuals have capacity equal to 1 and the remaining individual has maximal capacity equal to 2.
Let us consider the pure strategy equilibrium where the maximal capacity individual is in $D$, and they nominate both other individuals; the remaining two individuals comprise $P$, with each of them using their one nomination for the other.
This equilibrium is depicted in Figure \ref{fig:3complete} below, where the vertex shaded black vertex comprises the $D$-set and the unfilled vertices make up the $P$-set, and the direction of the arrows indicates who nominated who.

\begin{figure}[ht!]
\centering
\begin{tikzpicture}

\tikzset{vertex/.style = {shape=circle,draw,minimum size=1.5em}}
\tikzset{edge/.style = {->,> = latex'}}
\node[vertex] (a) at  (0,3) {};
\node[vertex] (b) at  (0,0) {};

\node[fill = black, vertex] (e) at  (3, 1.5) {$\ell$};

\draw[-{Latex[length=3mm]}] (e) to (a);

\draw[-{Latex[length=3mm]}] (a) to (b);
\draw[-{Latex[length=3mm]}] (b) to (a);

\draw[-{Latex[length=3mm]}] (e) to (b);



%

\end{tikzpicture}
\caption{Equilibrium on 3-person network}
\label{fig:3complete}
\end{figure}

To see that this equilibrium is action-stable, we need to consider two kinds of deviations.
The first is that wherein the sole player in $D$ changes his action (so he reduces his provision), and the second that wherein one of the those in $P$ changes their action (so they start providing). Figure \ref{fig:3completeDynamics} below shows repeated application of the best-action reply dynamic for each of these starting points. In both cases population behaviour will ultimately settle on the original equilibrium.

\begin{figure}[ht!]
\centering
\includegraphics{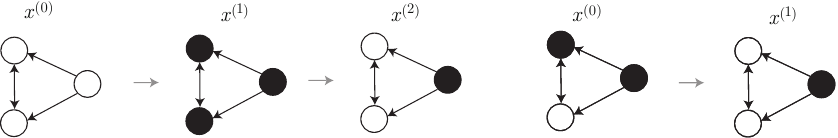}
\caption{Illustration that the equilibrium in Figure \ref{fig:3complete} is action-stable}
\label{fig:3completeDynamics}
\end{figure}

Now consider condition (ii) in Theorem \ref{theorem:sufficient}.
This places a minimum requirement on the number of those in $D$ that each player in $P$ must be nominated by.
And given that the left hand side of the expression is integer-valued, each player in $P$ must be nominated by at least two players in $D$. We note that being nominated by at least two players in $D$ is precisely the condition for stability in \cite{BramoulleKranton:2007:JET}.\footnote{BK consider a continuous action set so that small deviations are given by $\varepsilon$. With this, the expression reduces to $\min_{ i \in P}  |N_H(i)| \geq \frac{q^*}{q^*-\epsilon}$. The smallest positive integer satisfying this is 2.}



Thus, Theorem \ref{theorem:sufficient} partially generalises BK's result, by identifying conditions (on individual nominations) for stability in the more constrained environment where maximal sharing is not allowed.
However, we emphasise that, just as in BK, there are networks for which neither of the conditions of Theorem \ref{theorem:sufficient} can ever be satisfied in equilibrium.

%

We conclude this section on action-stability a brief discussion of the recent paper by \cite{BervoetsFaure:2019:JET} who further explore the stability of equilibria in the BK model.
Consider the environment of Figure \ref{fig:3complete}, though suppose instead that the sharing capacity is maximal for one and all.
While we have depicted one specialised equilibrium there are two others - simply rotate which player comprises the singleton $D$-set.
(We note however that neither of these two specialised equilibria are stable in our environment.)
In a continuous action environment where $\qstar = 1$, \cite{BervoetsFaure:2019:JET} point out that the three specialised equilibria are extreme points of a connected component of Nash equilibria $\set{(x_i^*, x_j^*, x_k^*) | \sum_l x^{*}_{l} =1 \text{ and } x^{*}_{l} \geq 0, \text{ for all } l}$.
\cite{BervoetsFaure:2019:JET} then show that if stability is extended from a point-valued notion to a set-valued one, then every instance of the BK model contains a stable set of equilibria of the form above.\footnote{Extending the point valued solution concept of \cite{Nash:1951:AM,Nash:1950:PNASUSA} to set valued variants is common even in static environments due to limitations that arise when defining solutions by points. See for example the \emph{strategically stable sets} of \cite{KohlbergMertens:1986:E}.}

In our setup with a discrete action space, every Nash equilibrium is isolated and there does not exist a non-trivial connected component of Nash equilibria (except for singleton sets) and this is why the technical machinery of \cite{BervoetsFaure:2019:JET} cannot be extended.

\subsection{Nomination-stability of specialised profiles}\label{NomDYNAMICS}


In the previous section we considered unilateral deviations in action choice.
We now consider unilateral deviations in nomination choice.

Given the potentially enormous number of possible deviations in nomination, recall the number is ${\degree{}(i) \choose \shareNumber(i)}$, we consider an alternative model in which each individual $i$ must nominate no more than $\shareNumber(i)$ neighbours.
That is, the nomination sets are now replaced with $\nominateSet{i}{\scriptscriptstyle{\leq}}$, where $\nominateSet{i}{\scriptscriptstyle{\leq}} := \set{ S \subseteq \neighbourhood{G}(i): |S| \leq \shareNumber(i) }$, and everything else in the model remains as before (i.e., common action set given by $X = \set{0, 1, \dots, \xbar}$, utilities as given in equation~\eqref{eq:utility}, etc.).

While the number of total nominations may now be far greater for some players, we can consider deviations of the kind not viable before. Specifically, we allow each individual to increase or decrease the number nominated neighbours by 1.
That is, for a given nomination profile $\nominateBold$, we denote by $\nominateBold+ (i \rightarrow j)$ (or $\nominateBold- (i \rightarrow j)$) the new nomination profile formed by adding (removing) a nomination of $j$ by $i$ (where possible).\footnote{We qualify with ``where possible'' because if $i$ is already nominating $j$, then $\nominateBold+ (i \rightarrow j)$ is not a deviation; if $i$ is already nominating $\shareNumber(i)$ neighbours and $j$ is not one of them, then again $\nominateBold+ (i \rightarrow j)$ is not an allowable deviation; etc.}
Given nomination profile, $\nominateBold$, we write $\set{\nominateBold \pm (i  \rightarrow j)}$ for the collection of feasible deviations from nomination it.


For a given strategy profile $(\sbold, \nominateBold)$, we consider a feasible deviations in nomination, and we then suppose that the new nomination profile is fixed.
From there we consider the best-action evolution (see Definition~\ref{def:brEvolution}) from the new profile. 
Our notion of nomination stability is then defined as follows.

\begin{definition}
Strategy profile $(\sbold, \nominateBold)$ is {\it nomination stable} if for every feasible deviation $\nominateBold' \in \set{\nominateBold \pm (i  \rightarrow j)}$, the best-action evolution of $(\sbold, \nominateBold')$ settles in $\sbold$.
\end{definition}



In words, we say that $(\sbold, \nominateBold)$ is nomination stable if for every feasible deviation from $\nominateBold$, the action profile $\sbold$ is the long run outcome.
That is, $\sbold$ is invariant in the long-run under the best-action response dynamic.


%
%
%
%
%

A {\it clique} in a graph $G = (V, E)$ is a subset of vertices such that every pair of vertices in the subset are adjacent.
We now introduce a class of networks, \biClique s, for which nomination stability of a specialised equilibrium for any graph in the class involves the full saturation of capacity constraints for the specialists.
These graphs have the property that the neighbourhood of every vertex can be partitioned into two disjoint subsets each of which is a clique.

\begin{definition}\label{def:biclique}
Say that connected graph $G = (V,E)$ is a {\it \biClique} if for every vertex $i$, the neighbourhood of $i$, $\neighbourhood{}(i)$, can be partitioned into two cliques $\neighbourhood{}^{(1)}(i)$ and $\neighbourhood{}^{(2)}(i)$. 
\end{definition}

The left hand panel of Figure~\ref{fig:nom-stab} below provides an example of a \biClique.
We assume that the public goods game with nominations for each $i$ given by $\nominateSet{i}{\scriptscriptstyle{\leq}}$ is played over this network. We further assume that $\kappa(i) = |N_G(i)|$ for each player $i$.
Both the middle panel and right hand panel present specialised equilibria. 
In the equilibrium of the middle panel those in $D$ share maximally, while in that of the right hand panel those in $D$ do not (we omit nominations by the non-specialists in both panels to reduce clutter).
It is easy to see that the specialised equilibrium in the middle panel will not change with a unilateral deviation in nomination.
For the equilibrium in the right hand panel however, adding a feasible nomination between, for example, the two specialists on the left hand side, will lead to a new action profile under the best-action response dynamic.

\begin{figure}[h]
\centering
\includegraphics[scale=0.4]{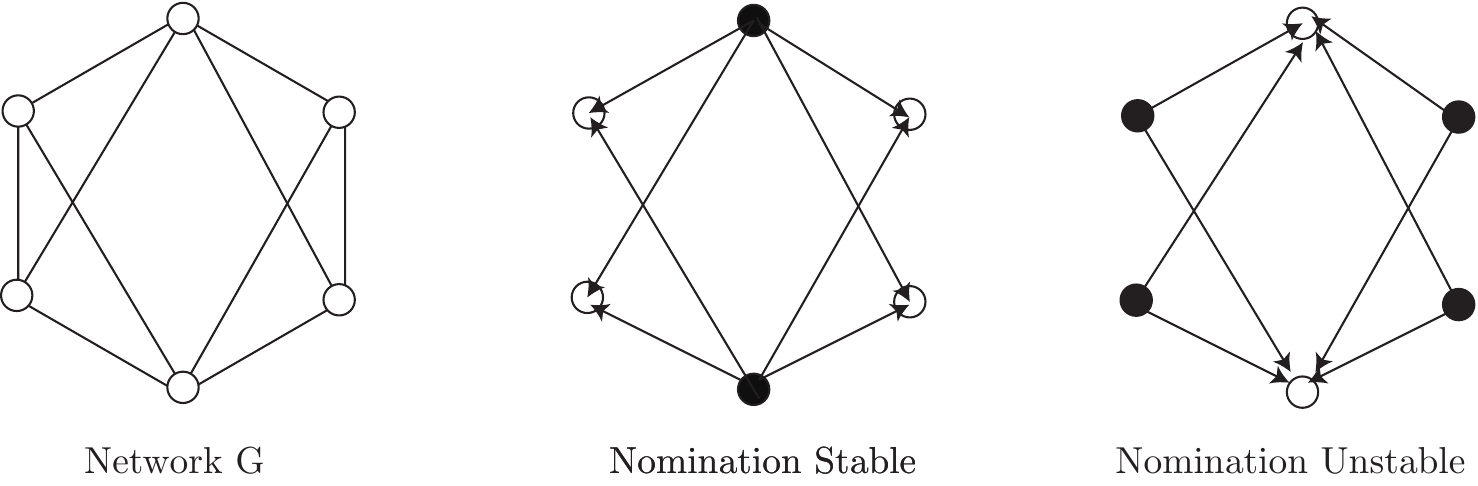}
\caption{A \biClique~and two specialised equilibria.} \label{fig:nom-stab}
\end{figure}

While the graph of Figure~\ref{fig:nom-stab} is symmetric, as are the remaining examples we give of \biClique s, this is not a requirement for a graph to satisfy Definition~\ref{def:biclique}.

We now show, in Proposition~\ref{prop:full-sat} that the feature of nomination stability above is true for all \biClique s and is not a feature of the particular example in Figure~\ref{fig:nom-stab}.
Moreover, the proposition confirms that if $(\sbold, \nominateBold)$ is a specialised Nash equilibrium that is nomination-stable, then all specialists nominate all of their neighbours (hence the full saturation of their capacity constraints).

\begin{proposition} \label{prop:full-sat}
Suppose that $G$ is a \biClique. 
Let $\nominateSet{i}{\scriptscriptstyle{\leq}} := \{ S \subset N_G(i): |S| \leq \kappa(i) \}$ and suppose that $ \kappa(i) =|N_G(i)|$ for all $i$. Suppose that $(\sbold, \nominateBold)$ is a specialised profile induced by $DP$-Nash subgraph $H$ of $G$ and is nomination stable.
Then 
\[
	|m_i| = \shareNumber(i)  \text{ for all } i \in D 
\]
\end{proposition}

\begin{proof}
Let  $(\sbold, \nominateBold)$ be a nomination stable specialised equilibrium induced by $DP$-Nash subgraph $H$ of $G$.
Consider some player $i \in P$.
We first show that $ |D \cap N^{(1)}(i)| \leq 1$ and $ |D \cap N^{(2)}(i)| \leq 1$. Suppose that  $|D \cap N^{(1)}(i)| \geq 2$. Let $p, q \in D \cap N^{(1)}(i)$ such that $p \neq q$.
Then since $(\sbold, \nominateBold)$ is a Nash equilibrium, $p  \not \in m_q$ and $q \not \in m_p$. Note that $p \in N_G(q)$ and $q \in N_G(p)$ because of $G$ is a \biClique. Thus, if $\sbold' := \mathcal{B}_{\nominateBold + ( p \rightarrow q)}(\sbold)$, action profile $\sbold'$ does not settles in $\sbold$ since the best response of individual $q$ is 0 under $\nominateBold + ( p \rightarrow q)$. Hence, $(\sbold,\nominateBold)$ is not nomination stable which is a contradiction. The same argument holds for $|D \cap N^{(2)}(i)| \geq 2$.

Now suppose that for some $i \in D$, $|m_i| < \kappa(i) $. Then, there exist $j \in N_G(i)$ such that $ j \not\in m_i$. 
There are two cases to consider.

\noindent \textbf{Case 1:}  $j \in D$.
Then let $\sbold' = \mathcal{B}_{\nominateBold + (i \rightarrow j)}(\sbold)$. Then, since the best response of the agent at $j$ is 0 under $\nominateBold + ( i \rightarrow j)$,  $\sbold'$ cannot settle at $\sbold$.

\noindent\textbf{Case 2:} $j \in P$.  Then, $ |D \cap N^{(1)}(j)| \leq 1$ and $ |D \cap N^{(2)}(j)| \leq 1$.
Without loss of generality, suppose that $i \in N^{(1)}(j)$.
Then since $(\sbold,\nominateBold)$ is a Nash equilibrium and $i \not \rightarrow j$, there exists $k \in D \cap N^{(2)}(j)$ such that $j \in m_k$.  Let $\sbold'=\mathcal{B}_{m - (k \rightarrow j)}(\sbold)$. Then, since $|D \cap N^{(2)}(j)| \leq 1$, the best response of the agent at $j$ is $q^*$ and $\sbold'$ does not settle at $\sbold$.
\end{proof}

The underlying reason for Proposition \ref{prop:full-sat} is that the number of specialists in the nomination stable specialised equilibrium of a \biClique~is neither too great or too small.
If a $D$-set is small, then adding a new nomination destabilises a specialised equilibrium.
Whereas if the $D$-set is too large, then removing an existing nomination destabilises a specialised equilibrium.


We conclude by noting that a class of graphs that feature prominently in BK (see Figure 2 and Figure 5 in BK), are special cases of \biClique s.
These are defined as follows.
Let $n$ be an even number.
For each $l$ such that $1 \leq l \leq n/2 -1$, define a graph on $n$ vertices, $G_l$, in which the neighbourhood of every vertex $i$, $N_{G_l}(i)$ is the set $\set{ i-l, i-l+1, \cdots, i -1 } \cup \set{ i+1, i+2, \cdots, i + l }$.
Such graphs are \biClique s since for every vertex $i$ we can define
\[
	N^{(1)}(i):= \{ i-l, i-l+1, \cdots, i -1 \} , \quad N^{(2)}(i)=\{ i+1, i+2, \cdots, i + l \}.
\]
When $l=1$, $G_l$ is a ring network (the left panel in Figure \ref{fig:s-net}) and as $l$ increases, the network approaches the complete network.


\begin{figure}[h]
\centering
\includegraphics[scale=0.35]{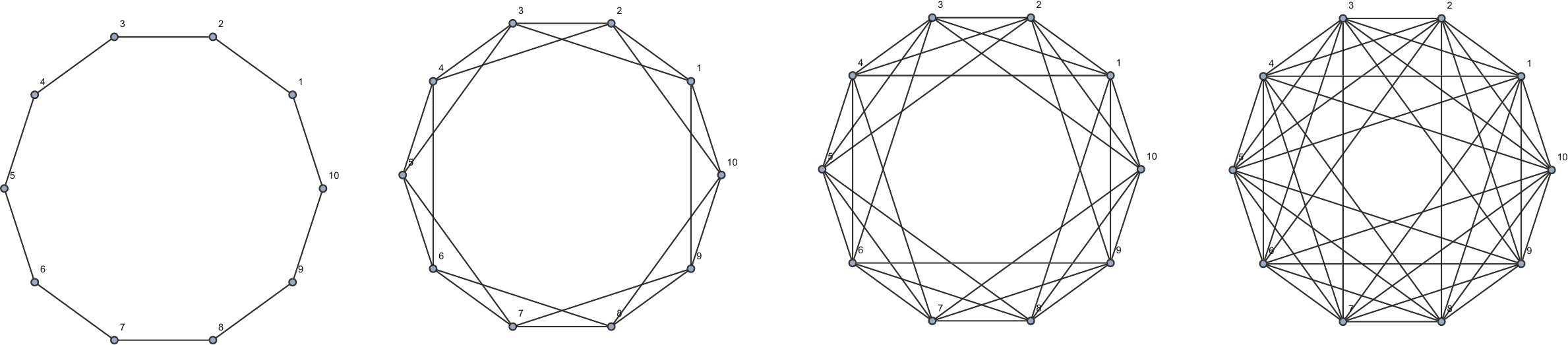}
\caption{With $n=10$, from left to right, we depict $G_{l}$ for $l=1,2,3,4$. } \label{fig:s-net}
\end{figure}




\section{Conclusion}\label{CONCLUSION}


This paper introduces a network model of public goods in which individuals face a capacity constraint on how many neighbours they can share with.
Each individual must decide not only how much of the good to provide but also which subset of his neighbours to offer access to.


We focus on a subclass of pure strategy equilibria, that, following  \cite{BramoulleKranton:2007:JET}, are termed specialised equilibria, wherein each individual provides either the desired quantity or provides nothing and free rides.
We prove the existence of specialised equilibria and we show that as capacity increases equilibria need not become more efficient.
Finally we show that specialised equilibria are the only candidates for long run behaviour.

In our model there is a pre-specified network and public good provision occurs over an endogenously generated subnetwork.
But our equilibrium existence result goes through even if there was initially no network (see Appendix \ref{APP:endogenous}).
That is, a special case of our model is one of a public goods game being played over an endogenously generated network.
When we introduce dynamics, we do so first with the nomination component held fixed so that the endogenously generated subnetwork is constant, and second with only single changes in nomination allowed.
Relaxing this to allow for arbitrary changes in nomination would yield a rich model capturing the coevolution of behaviour and network formation.

In our opinion, individuals being constrained in the number of neighbours they may share with and interact with seems well-suited to applications.
Endogenous network formation with follow-up strategic interactions occurring over the endogenously generated network appears a fruitful research direction.
In the current set up sharing bestows a benefit on neighbours but this need not be the case.
\cite{GutinHirano:2021:JEIC} add constrained sharing to the Susceptible-Infected-Removed (SIR) model of disease transmission of \cite{KermackMcKendrick:1927:PRSLSCPMPC}.
\cite{GutinHirano:2021:JEIC} interpret constrained sharing as ``social distancing'' restrictions imposed on a population and document how the reach of an epidemic is curtailed when such measures are in place.
We leave the study of further variants of interactions on networks with constraints on sharing to future work.


\newpage

\appendix

\setstretch{1.1}

{\footnotesize

\section*{APPENDIX}\label{APP}

\section{Public goods games on endogenous networks}\label{APP:endogenous}


Here we consider the special case of our model where the underlying network $G$ is a complete graph.
We will show that this special case is in fact equivalent to a setting where there is initially no network and individuals play a public goods game over an endogenously generated subnetwork.
Our existence result, Theorem~\ref{theorem:balancedPSNE}, ensures existence under this interpretation too.
The original exogenously specified network can be viewed as constrains on the set of individuals that each individual is allowed to nominate. We note that this relationship is symmetric; whenever $i$ can nominate $j$, then $j$ can also nominate $i$.




Under this interpretation, the particular case of our model where the exogenous network is a complete graph is akin to a model of endogenous network formation with a public goods game being played over the resulting endogenously generated network.
That is, each individual, call them $\ell$, can nominate \textbf{any} $\shareNumber(\ell)$ individuals as co-benefactors.
So by Theorem~\ref{theorem:balancedPSNE} a pure strategy Nash equilibrium always exists under this interpretation as well.

When the underlying network is not a complete graph, it is easy to see that every pure strategy Nash equilibrium is also an equilibrium when the underlying network is a complete graph.
However, there may be equilibria when the underlying network is a complete graph that are not equilibria otherwise.
To see this, consider a 5-individual instance of the endogenous network formation variant (i.e., $G$ is a complete graph), with players labelled $h, i, j, k$, and $\ell$.
(This instance will parallel that of Example~\ref{ex:Netflix}.)
We assume that capacities are given by $\shareNumber(h) = \shareNumber(i) =\shareNumber(j) = \shareNumber(k) = 1$ and $\shareNumber(\ell) = 4$.



Clearly both specialised equilibria in Figure~\ref{fig:starEquilibria} are also specialised equilibria in this endogenous network formation variant.
However, when the network is complete so that each individual can nominate \textbf{any} subset of individuals equal to their capacity, there are additional equilibria.
One such equilibrium that it not an equilibrium for the exogenous network of Example~\ref{ex:Netflix} is depicted below in Figure~\ref{fig:starEquilibriaAPP}

\begin{figure}[hbt!]
\centering
\tikzstyle{vertexX}=[circle,draw, top color=gray!10, bottom color=gray!70, minimum size=10pt, scale=0.9, inner sep=0.5pt]
\tikzstyle{vertexY}=[circle,draw, top color=black!10, bottom color=red!70, minimum size=25pt, scale=0.8, inner sep=0.4pt]
\tikzstyle{vertexZ}=[circle,draw, top color=black!10, bottom color=blue!70, minimum size=25pt, scale=0.8, inner sep=0.4pt]
\tikzset{arc/.style = {->,> = latex'}}
\begin{tikzpicture}[scale=0.5]
\node (y) at (5.0,5.0) [vertexY] {$\ell$};
\node (x1) at (2.5,2.5) [vertexY] {$k$}; 
\node (x2) at (2.5,7.5) [vertexZ] {$h$};
\node (x3) at (7.5,7.5) [vertexZ] {$i$}; 
\node (x4) at (7.5,2.5) [vertexZ] {$j$}; 
\draw [arc,line width=1.3pt] (x2) -> (x1); 
\draw [arc,line width=1.3pt] (x3) -> (y); 
\draw [arc,line width=1.3pt] (x4) -> (y); 
%
\end{tikzpicture} 
\caption{Equilibrium of endogeneous network formation model.}
\label{fig:starEquilibriaAPP}
\end{figure}

\noindent
where the difference with the equilibrium depicted in the right hand panel of Figure~\ref{fig:starEquilibria} is that now $h$ is nominating $k$ instead of $\ell$ and so clearly it is no longer optimal for $k$ to buy.
This is permitted when the underlying graph is complete since $h$ and $k$ are neighbours, whereas $h$ and $k$ are not neighbours in the social network of Example~\ref{ex:Netflix} (see Figure~\ref{fig:5star}).

\section{Results and proofs not in main text}\label{App:Lemmas}

The proofs of Propositions~\ref{proposition:necessary} and \ref{proposition:notSufficient} and the proof of Theorem~\ref{theorem:sufficient} require some lemmas.
We open this section by proving the necessary lemmas, but before starting with that we introduce some notation. 
We are interested in comparing population level contributions. As such we will wish to order action profiles wherever possible. For any two action profiles $\sbold, \sbold' \in \Sbold$, we say $\sbold \geq \sbold'$ if $\strategy{i} \geq \strategy{i}'$ for all $i \in \set{1, \dots, n}$, and $\sbold > \sbold'$ if $x_{i} \geq y_{i}$ for all $i \in \set{1, \dots, n}$ and $\strategy{j} > y_{j}$ for at least one $j \in \set{1, \dots, n}$.


%
%
%
%
We then have the following.

\begin{lemma}\label{lemma:nashSequencePopulation}
Suppose that $(\sboldstar, \nominateBoldStar)$ is a pure strategy Nash equilibrium and let $\sbold\leq \sboldstar$. Then the best action evolution of $\sbold$ satisfies for all $t\geq 0$
\begin{description}
\item[(i)]
$\sbold^{(t+1)} \geq \sboldstar \geq \sbold^{(t)}$ if $t$ is even.
\item[(ii)]
$\sbold^{(t+1)} \leq \sboldstar \leq \sbold^{(t)}$ if $t$ is odd.
\end{description}
\end{lemma}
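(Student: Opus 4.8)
The plan is to isolate two structural properties of the best-action reply map with the equilibrium nominations held fixed, $\bestReply{\nominateBoldStar}{\cdot}$, and then run a short induction on $t$ that keeps track of parity. The first property is that the equilibrium action profile is a fixed point, $\bestReply{\nominateBoldStar}{\sboldstar} = \sboldstar$. This follows from the Nash property together with the characterization \eqref{eqn:bestResponse}: since nominations are payoff-irrelevant to the nominator, at $(\sboldstar, \nominateBoldStar)$ each $\strategy{i}^{*}$ maximizes $i$'s utility given the others' actions and the fixed nominations, and that maximizer is exactly the quantity bringing $i$'s total to $\qstar$ (or $0$ when the inflow already reaches $\qstar$), i.e.\ the right-hand side of \eqref{eqn:bestResponse} evaluated at $\sboldstar$.

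The second and conceptually central property is that $\bestReply{\nominateBoldStar}{\cdot}$ is antitone: $\sbold \le \sbold'$ implies $\bestReply{\nominateBoldStar}{\sbold} \ge \bestReply{\nominateBoldStar}{\sbold'}$. This is read straight off \eqref{eqn:bestResponse}. For each agent $i$, the inflow $\sum_{\{j \,:\, i \in \nominate{j}^{*}\}} \strategy{j}$ is a sum of coordinates of the argument with nonnegative ($0/1$) coefficients, hence nondecreasing in the argument; therefore $\qstar$ minus this inflow is nonincreasing, and composing with $\max\{\cdot,0\}$ preserves the reversal coordinatewise. This is simply the strategic-substitutes structure of the game: the more one's nominating neighbours supply, the less one needs to supply oneself.

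Given these two properties the result is a routine induction. I would prove the single statement that $\sbold^{(t)} \le \sboldstar$ for even $t$ and $\sbold^{(t)} \ge \sboldstar$ for odd $t$; claims (i) and (ii) then follow by reading off two consecutive indices. The base case $t=0$ is the hypothesis $\sbold \le \sboldstar$. For the inductive step, if $\sbold^{(t)} \le \sboldstar$ then antitonicity and the fixed-point identity give $\sbold^{(t+1)} = \bestReply{\nominateBoldStar}{\sbold^{(t)}} \ge \bestReply{\nominateBoldStar}{\sboldstar} = \sboldstar$, and symmetrically $\sbold^{(t)} \ge \sboldstar$ yields $\sbold^{(t+1)} \le \sboldstar$. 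I do not anticipate a genuine obstacle: antitonicity is the one idea that does the work, and the rest is parity bookkeeping. The only point needing a line of care is the fixed-point identity, where one must invoke \eqref{eqn:bestResponse} to pin down that each agent's equilibrium action coincides with the best-action reply, rather than merely being some best response.
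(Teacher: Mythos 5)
Your proposal is correct and follows essentially the same route as the paper: the paper's proof likewise uses the fact that at equilibrium each $\strategy{i}^{*}$ coincides with the best-action reply of \eqref{eqn:bestResponse} (handling $\strategy{i}^{*}=0$ and $\strategy{i}^{*}\neq 0$ separately) together with the order-reversing effect of a larger inflow, and then runs the same parity induction. You have merely made the two ingredients (fixed point and antitonicity of $\Bcal_{\nominateBoldStar}$) explicit as standalone properties, where the paper verifies them inline for $t=0,1,2$ and appeals to induction.
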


\begin{proof}
Let $t=0$. Then by assumption $\sbold^{(0)}\leq \sboldstar$.
If we choose $i$ to be an individual with $\strategy{i}^{*} =0$, then clearly we have $\strategy{i}^{(1)} \geq 0 = \strategy{i}^{*} =0$. Now, let $i$ be an individual with $\strategy{i}^{*} \neq 0$. Then
\[ \strategy{i}^{*} = \qstar - \sum_{ \{ j \in \neighbourhood{G}(i) : i \in \nominate{j} \}} \strategy{j}^{*}
 \leq \qstar - \sum_{ \{ j \in \neighbourhood{G}(i) : i \in \nominate{j} \}} \strategy{j}^{(0)}
= \strategy{i}^{(1)}.
\]
Thus $x_i^{(1)} \geq x_i^*$ and  $\sbold^{(1)} \geq \sboldstar$. By considering  individuals with $x^{(2)}\not= 0$ separately, we can use the same argument  to shows that $\sbold^{(2)} \leq \sboldstar$. Thus we have $\sbold^{(2)} \leq  \sboldstar \leq \sbold^{(1)}$. The result then follows easily by induction.
\end{proof}

Lemma \ref{lemma:nashSequencePopulation} implies that if we take a pure strategy Nash equilibrium action profile $\sboldstar$, for any $\sbold \leq \sboldstar$ the best action evolution of $\sbold$ either oscillate around $\sboldstar$ forever, or there exists an $t_0$ such that for all $t\geq t_0$,  $\sbold^{(t)}=\sboldstar$.



We emphasise that Lemma \ref{lemma:nashSequencePopulation} holds for all Nash equilibrium strategy profiles and not simply specialised ones. While Lemma \ref{lemma:nashSequencePopulation} is a statement about the population action profiles, the next two results, Lemmas \ref{lemma:nashSequenceIndividual*} and \ref{lemma:nashSequenceIndividual} are statements about best-action responses at the level of the individual. We emphasise again that both results apply to all pure strategy Nash equilibria, not simply specialised ones.
The first lemma is an immediate corollary of Lemma~ \ref{lemma:nashSequencePopulation}.
\begin{lemma}\label{lemma:nashSequenceIndividual*}
Suppose that $(\sboldstar, \nominateBoldStar)$ is a pure strategy Nash equilibrium. Let $\sbold < \sboldstar$. Then the best action evolution of $\sbold$ satisfies the following:
\begin{description}
\item[(i)]
If $t$ is odd, then $\strategy{i}^{(t)} \neq 0$, for all $i$ such that $\strategy{i}^{*} > 0$, and 
\item[(ii)]
If $t$ is even, then $\strategy{i}^{(t)} \neq \qstar$, for all $i$ such that $\strategy{i} < \qstar$.
\end{description}
\end{lemma}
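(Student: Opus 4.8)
The plan is to read off the lemma directly from the sandwich bounds of Lemma~\ref{lemma:nashSequencePopulation}, since (as the text notes) it is meant to be an immediate corollary. For part (i), fix an odd $t$. Lemma~\ref{lemma:nashSequencePopulation}(ii) gives $\sbold^{(t)}\geq \sboldstar$, i.e.\ $\strategy{i}^{(t)}\geq \strategy{i}^{*}$ for every $i$. Hence if $\strategy{i}^{*}>0$ then $\strategy{i}^{(t)}\geq \strategy{i}^{*}>0$, so $\strategy{i}^{(t)}\neq 0$, which is exactly the claim. Note that the threshold here is the \emph{equilibrium} value $\strategy{i}^{*}>0$, and the monotone bound delivers it with no further work.

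For part (ii), fix an even $t$. Lemma~\ref{lemma:nashSequencePopulation}(i) gives $\sboldstar\geq \sbold^{(t)}$, i.e.\ $\strategy{i}^{(t)}\leq \strategy{i}^{*}$ for every $i$, while $\strategy{i}^{*}\leq \qstar$ always. Consequently $\strategy{i}^{(t)}=\qstar$ can occur only if $\strategy{i}^{*}=\qstar$; equivalently, $\strategy{i}^{(t)}\neq \qstar$ for every $i$ with $\strategy{i}^{*}<\qstar$. This is the precise mirror of part (i), with the equilibrium threshold $\strategy{i}^{*}<\qstar$, and is again immediate from Lemma~\ref{lemma:nashSequencePopulation}. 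Since the hypothesis $\sbold<\sboldstar$ forces $\strategy{i}\leq \strategy{i}^{*}$, the index set $\{i:\strategy{i}^{*}<\qstar\}$ already contains every coordinate that was perturbed below $\qstar$, with the sole exception of the boundary coordinates satisfying $\strategy{i}<\qstar=\strategy{i}^{*}$.

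The printed statement uses the initial-profile threshold $\strategy{i}<\qstar$ in place of $\strategy{i}^{*}<\qstar$, so the one residual task is this boundary case $\strategy{i}<\qstar=\strategy{i}^{*}$, and it is the \textbf{main obstacle}. For such $i$, having $\strategy{i}^{(t)}=\qstar$ at even $t\geq 2$ is equivalent to all nominators of $i$ being silent at the preceding odd step, i.e.\ $\sum_{\{j\,:\,i\in \nominate{j}\}}\strategy{j}^{(t-1)}=0$, so one would need to exhibit a nominator $j$ of $i$ with $\strategy{j}^{(t-1)}>0$. The natural attempt is to propagate the strict deficit at $i$: because $\strategy{i}^{*}=\qstar$, every nominator $j$ of $i$ has $\strategy{j}^{*}=0$ and hence receives equilibrium inflow at least $\qstar$ from its own nominators, and one would try to show the downward perturbation makes some such $j$ overshoot at odd steps and feed positive inflow back to $i$. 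The difficulty is that this feedback need not return to $i$ along the nomination digraph within the required parity, and the boundary case can genuinely fail: in the unit-capacity Netflix game on a $4$-cycle with nominations forming a directed $4$-cycle, at the alternating equilibrium $(1,0,1,0)$ lowering one driver to $0$ returns \emph{that same} vertex to $\qstar$ at the even step $t=2$. Thus the monotonicity argument cleanly yields the equilibrium-threshold form $\strategy{i}^{*}<\qstar$ (the true analogue of part (i)), and this — rather than the initial-profile threshold — is what follows as an immediate corollary of Lemma~\ref{lemma:nashSequencePopulation}; I would state part (ii) with $\strategy{i}^{*}<\qstar$ accordingly.
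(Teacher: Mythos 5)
Your proof is correct and is exactly the argument the paper intends: the paper offers no separate proof, saying only that the lemma ``is an immediate corollary of Lemma~\ref{lemma:nashSequencePopulation}'', and your reading-off of both parts from the sandwich bounds ($\sbold^{(t)}\geq\sboldstar$ at odd $t$, $\sbold^{(t)}\leq\sboldstar$ at even $t$) is precisely that corollary. You are also right that the printed part (ii) contains a typo: the threshold should be the equilibrium value $\strategy{i}^{*}<\qstar$, mirroring the starred condition $\strategy{i}^{*}>0$ in part (i), since only the starred version follows from Lemma~\ref{lemma:nashSequencePopulation} --- which is all the paper claims. Your directed $4$-cycle counterexample to the literal statement is valid: with unit capacities, nominations $1\to 2\to 3\to 4\to 1$ and equilibrium $\sboldstar=(1,0,1,0)$, the perturbation $\sbold=(0,0,1,0)<\sboldstar$ evolves as $\sbold^{(1)}=(1,1,1,0)$, $\sbold^{(2)}=(1,0,0,0)$, so $\strategy{1}^{(2)}=\qstar$ even though $\strategy{1}^{(0)}<\qstar$, refuting the unstarred reading at even $t=2$ while leaving the starred reading intact (vertex $1$ has $\strategy{1}^{*}=\qstar$ and so is excluded from its index set).
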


We will need the following lemma which essentially says that a player, say player $\ell$, reacts to any changes in action of players that influence it, unless $\ell$ does not need to provide anything. That is, $x_{\ell}$ can remain zero and is still satisfied.


\begin{lemma}\label{lemma:nashSequenceIndividual}
Suppose that $(\sboldstar, \nominateBoldStar)$ is a pure strategy Nash equilibrium. 
Let $\sbold\leq \sboldstar$ be an action profile and consider the best action evolution of $\sbold$. Let  $\ell$ be nominated by $i$, that is $\ell \in \nominate{i}$.
\begin{align}
\text{If } \strategy{\ell}^{(t)} > 0 \mbox{ and }\strategy{i}^{(t)} > \strategy{i}^{(t-1)}, &\text{ then } \strategy{\ell}^{(t+1)} < \strategy{\ell}^{(t)} \label{eqn:nashSequenceIndividual1}\\
\text{If } \strategy{\ell}^{(t+1)} > 0  \mbox{ and }\strategy{i}^{(t)} < \strategy{i}^{(t-1)}, &\text{ then } \strategy{\ell}^{(t+1)} > \strategy{\ell}^{(t)} \label{eqn:nashSequenceIndividual2}
\end{align}
\end{lemma}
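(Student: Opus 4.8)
The plan is to prove both implications directly from the best-action response formula \eqref{eqn:bestResponse}, using the monotonicity provided by Lemma \ref{lemma:nashSequencePopulation} to control the sign of the change in $\strategy{\ell}$. The key observation is that $\strategy{\ell}^{(t+1)} = \bestResponse{\ell, \nominateBold}{\sbold^{(t)}}$ depends on the action profile at time $t$ only through the in-flow $\sum_{\set{k \in \neighbourhood{G}(\ell) \, : \, \ell \in \nominate{k}}} \strategy{k}^{(t)}$, so I want to show that a strict increase (respectively decrease) in the neighbour $i$'s action strictly decreases (respectively increases) the in-flow to $\ell$, and hence moves $\strategy{\ell}$ in the opposite direction, provided $\strategy{\ell}$ is not pinned at the boundary $0$.

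For \eqref{eqn:nashSequenceIndividual1}, suppose $\strategy{\ell}^{(t)} > 0$ and $\strategy{i}^{(t)} > \strategy{i}^{(t-1)}$. I first note the parity structure from Lemma \ref{lemma:nashSequencePopulation}: the evolution oscillates around $\sboldstar$, and the direction of the inequality $\sbold^{(t)} \lessgtr \sboldstar$ alternates with the parity of $t$. The hypothesis that $\strategy{i}$ strictly increased from $t-1$ to $t$ forces $t$ to have a definite parity (increases happen when moving from an odd to an even index, where the profile jumps up past $\sboldstar$), which in turn tells me the comparison between $\sbold^{(t)}$ and $\sbold^{(t-1)}$ componentwise. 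Since $\strategy{\ell}^{(t)} > 0$ means $\ell$ was not satiated by its in-flow at time $t-1$, the formula \eqref{eqn:bestResponse} gives $\strategy{\ell}^{(t)} = \qstar - \sum \strategy{k}^{(t-1)}$ exactly (no truncation). Then I compare with $\strategy{\ell}^{(t+1)} = \max\set{\qstar - \sum \strategy{k}^{(t)}, 0}$: because $i$ nominates $\ell$ and $\strategy{i}^{(t)} > \strategy{i}^{(t-1)}$, while all other nominators of $\ell$ moved weakly in the same (upward) direction over this step by Lemma \ref{lemma:nashSequencePopulation}, the in-flow strictly increased, so $\qstar - \sum \strategy{k}^{(t)} < \strategy{\ell}^{(t)}$, and therefore $\strategy{\ell}^{(t+1)} \leq \qstar - \sum\strategy{k}^{(t)} < \strategy{\ell}^{(t)}$, giving the claim.

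The proof of \eqref{eqn:nashSequenceIndividual2} is the mirror image: here $\strategy{i}^{(t)} < \strategy{i}^{(t-1)}$ forces the opposite parity, the in-flow to $\ell$ strictly decreases, and the hypothesis $\strategy{\ell}^{(t+1)} > 0$ is precisely what rules out the truncation at $0$ so that $\strategy{\ell}^{(t+1)} = \qstar - \sum \strategy{k}^{(t)} > \qstar - \sum \strategy{k}^{(t-1)} \geq \strategy{\ell}^{(t)}$. The main obstacle I anticipate is handling the $\max\set{\cdot, 0}$ truncation cleanly: the positivity hypotheses ($\strategy{\ell}^{(t)} > 0$ in the first case, $\strategy{\ell}^{(t+1)} > 0$ in the second) are placed exactly to guarantee the relevant term is at its unclamped value, but I must be careful about which of $\strategy{\ell}^{(t)}$ and $\strategy{\ell}^{(t+1)}$ the truncation is binding on in each case, and I must justify the monotone co-movement of all the other nominators of $\ell$ by invoking Lemma \ref{lemma:nashSequencePopulation} rather than only tracking $i$. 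Verifying that the correct parity-based direction of Lemma \ref{lemma:nashSequencePopulation} applies at the specific step where $\strategy{i}$ changes is the delicate bookkeeping step, but it is routine once the alternation of $\sbold^{(t)} \lessgtr \sboldstar$ is fixed.
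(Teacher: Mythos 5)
Your proposal follows exactly the paper's own argument: use Lemma \ref{lemma:nashSequencePopulation} to pin down the parity of $t$, conclude that every nominator of $\ell$ moves weakly in the same direction as $i$ over the step from $t-1$ to $t$ (so the in-flow to $\ell$ changes strictly), and then use the positivity hypotheses to deal with the truncation in \eqref{eqn:bestResponse}. The substance is right, but two details are written incorrectly. First, your parenthetical parity claim is backwards: by Lemma \ref{lemma:nashSequencePopulation} even-indexed profiles lie weakly below $\sboldstar$ and odd-indexed ones weakly above, so a strict increase $\strategy{i}^{(t)} > \strategy{i}^{(t-1)}$ forces $t$ to be odd (an even-to-odd step), not even; this is harmless, since all you actually use is that the whole profile moves weakly upward at that step, which you state correctly. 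Second, and more seriously, both of your final chains of inequalities handle the $\max\set{\cdot,0}$ in the wrong direction: since $\max\set{a,0} \geq a$, the step $\strategy{\ell}^{(t+1)} \leq \qstar - \sum \strategy{k}^{(t)}$ in case \eqref{eqn:nashSequenceIndividual1} is false exactly when the truncation binds on $\strategy{\ell}^{(t+1)}$, and the step $\qstar - \sum \strategy{k}^{(t-1)} \geq \strategy{\ell}^{(t)}$ in case \eqref{eqn:nashSequenceIndividual2} is false exactly when $\strategy{\ell}^{(t)}=0$ is a truncated value. The latter situation can genuinely occur: at odd times the profile sits above $\sboldstar$, and a passenger's equilibrium in-flow can already exceed $\qstar$ (e.g.\ when nominated by two drivers), so the in-flow at time $t-1$ can exceed $\qstar$.

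Both slips are repaired by a one-line case split, which is how the paper argues. In \eqref{eqn:nashSequenceIndividual1}, $\strategy{\ell}^{(t+1)}$ equals either $\qstar - \sum \strategy{k}^{(t)}$ or $0$, and both are strictly below $\strategy{\ell}^{(t)}$ (the former by the in-flow comparison, the latter by the hypothesis $\strategy{\ell}^{(t)} > 0$). In \eqref{eqn:nashSequenceIndividual2}, either $\strategy{\ell}^{(t)} = \qstar - \sum \strategy{k}^{(t-1)}$, in which case your chain is valid, or $\strategy{\ell}^{(t)} = 0$, in which case the conclusion is literally the hypothesis $\strategy{\ell}^{(t+1)} > 0$. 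So no idea is missing relative to the paper's proof; the gap is only in the direction of two written inequalities around the truncation — precisely the point you flagged as delicate but then resolved with the inequality signs reversed.
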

\begin{proof}
We first show \eqref{eqn:nashSequenceIndividual1}. So assume  $\strategy{i}^{(t)} > \strategy{i}^{(t-1)}$.
Then, by Lemma \ref{lemma:nashSequencePopulation}, $t$ must be odd. Now, again using Lemma~ \ref{lemma:nashSequencePopulation}, we have for  $\ell \in \nominate{i}$
\begin{align*}
\sum_{\set{j \in \neighbourhood{G}(\ell) : \ell  \in \nominate{j}}} \strategy{j}^{(t)} &- \sum_{\set{j \in \neighbourhood{G}(\ell) : \ell  \in \nominate{j}}} \strategy{j}^{(t-1)}\\
&= \strategy{i}^{(t)} - \strategy{i}^{(t-1)} + \sum_{\set{j \in \neighbourhood{G}(\ell) \, : \, \ell  \in \nominate{j}, \, j \neq i}}\strategy{j}^{(t)} - \sum_{\set{j \in \neighbourhood{G}(\ell) \, : \, \ell  \in \nominate{j}, \, j \neq i}} \strategy{j}^{(t-1)}\\
&\geq \strategy{i}^{(t)} - \strategy{i}^{(t-1)}\\
&> 0
\end{align*}
Rearranging and adding $\qstar$ to both sides of this inequality gives that 
\[
\qstar - \sum_{\set{j \in \neighbourhood{G}(\ell) : \ell  \in \nominate{j}}} \strategy{j}^{(t)} < \qstar - \sum_{\set{j \in \neighbourhood{G}(\ell) : \ell  \in \nominate{j}}} \strategy{j}^{(t-1)}.
\]
But note that by the best-action response  \eqref{eqn:bestResponse} and because we assume that $\strategy{\ell}^{(t)}>0$ the right hand side is equal to $\strategy{\ell}^{(t)}$. Also $\strategy{\ell}^{(t+1)}$ equals either the left hand side of the above inequality or is equal to zero and in both cases is  smaller than $\strategy{\ell}^{(t)}$.  The inequality in \eqref{eqn:nashSequenceIndividual2} is shown in a similar manner.
\end{proof}

We are now in a position to prove Propositions~\ref{proposition:necessary} and \ref{proposition:notSufficient}. For convenience we restate both.

\begin{customthm}{2}
Suppose that $(\sboldstar, \nominateBoldStar)$ is a pure strategy Nash equilibrium such that $0 < \strategy{\ell}^{*} < \qstar$ for some $\ell$. Then $(\sboldstar, \nominateBoldStar)$ is not action-stable.
\end{customthm}

\begin{proof}
Define $L := \set{j : 0 < \strategy{j}^{*} < \qstar}$. By assumption $\ell \in L$.  First we will show that $\ell$ cannot be the only element in the set $L$.  This is immediate since 
\[
0 < \strategy{\ell}^{*} = \qstar - \sum_{\set{j \in \neighbourhood{G}(\ell) \, : \, \ell \in \nominate{j}^{*}}}\strategy{j}^{*} < \qstar  \,\,\,\, \Longrightarrow \,\,\,\, 0 < \sum_{\set{j \in \neighbourhood{G}(\ell) \, : \, \ell \in \nominate{j}^{*}}}\strategy{j}^{*} < \qstar.
\]
Thus, there exists some $j \in L$ such that $j \in \neighbourhood{G}(\ell)$, $\ell \in \nominate{j}^*$.  It follows that there exist indices $i_1, i_2, \ldots, i_k$ such that $i_{1} \in \nominate{i_{2}}^*, i_{2} \in \nominate{i_{3}}^*, \ldots, i_{k-1}\in \nominate{i_k}^*$ and $i_k \in \nominate{i_{1}}^*$. (For the readers familiar with graph theory we can build a directed graph on $L$ with an arc from $j$ to $i$ if $i\in \nominate{j}^*$. The above result implies that every vertex has at least one in-neighbour and hence the graph contains a cycle.)
 By renaming if necessary we may assume that $i_j=j$ for all $j=1,\ldots,k$, so 
for $i=1,\ldots,k$ we have $i\in L$, $i\in \nominate{i+1}^*$ and $k\in \nominate{1}^*$.
We define $\sbold$ by subtracting $1$ from the first coordinate of $\sbold^*$, that is, 
\[ \sbold= (x^*_1-1,x^*_2\ldots,x^*_n).\] 
We claim that $\sbold$ does not settle in $\sboldstar$. We prove the claim by contradiction. In particular we assume that there exists a (minimal) $t_0$ such that for all $t\geq t_0$  and for all $i=1,\ldots,k$, we have $x_i^{(t)}=x_i^{(t+1)}$. Note that $t_0\geq 1$ as  $\strategy{1}^{(0)} = x_1^*-1\not= x_1^*=\strategy{1}^{(1)}$. Also 
for all $i=1,\ldots,k$, we have $\strategy{i}^{(t_0)}=\strategy{i}^{(t_0+1)}=\strategy{i}^{*}>0$. The contradiction now follows from Lemma~\ref{lemma:nashSequenceIndividual} by choosing $i\in \{1,\ldots,k\}$ such that 
$x_i^{t_0-1}\not=x_i^{t_0}$ and $\ell=i+1$. Such an $i$  exists as we have chosen $t_0$ minimal. 
\end{proof}

\begin{customthm}{3}
Suppose $(\sboldstar, \nominateBoldStar)$ is a specialised Nash equilibrium induced by $DP$-Nash subgraph $H$ of $G$, where (i) $\nominate{i} \cap \neighbourhood{H}(i) \neq \emptyset$ for all $i \in P$, and (ii) $\degree{H}(i) = 1$ for some $i \in P$.
Then $(\sboldstar, \nominateBoldStar)$ is not action-stable.
\end{customthm}

\begin{proof}
Without loss of generality we may assume that $1\in P$, $d_{H}(1) = 1$ and that $2$ is the neighbour of $1$ in $H$ (i.e., $1 \in m_2$). Also, since $m_1 \cap N_H(1) \neq \emptyset$ and $|N_H(1)|=d_H(1)=1$, we have $N_H(1) \subset m_1$, and thus $2 \in m_1$. 
Note that $x_2^*=\qstar$ as the strategy profile is a specialised equilibrium.
Consider 
\[\sbold=(x^*_1,x^*_2-1,\ldots,x^*_n)=(x^*_1,\qstar-1,\ldots,x^*_n).\] 

We claim that $\sbold$ does not settle in $\sboldstar$. To see this we prove by induction that for all odd $t$ we have $x_1^{(t)} > 0$    and for all even $t$ we have $x_2^{t}<q^*$, 
Clearly this is true for $t=0$ and also for $t=1$ as the best action response for $1$ is $x^{(1)}_1=1>0$. Now assume that the statement is true for all $t\leq t_0$ with $t_0\geq 2$.

 If $t_0$ is even then by induction hypothesis $x_2^{t_0}<q^*$ and by Lemma~\ref{lemma:nashSequencePopulation}  for all $j\in P$ we have $x_j^{(t_0)}\leq x^*_j = 0$ and thus $x_j^{(t_0)} = 0$. Since $1$ is only nominated by $2$ and perhaps elements in $P$,  the best action response for $1$ is $x_1^{(t_0+1)}=\qstar-x_2^{(t_0)}>0$.
 
  If $t_0$ is odd then by induction hypothesis $x_1^{t_0}>0$ and by Lemma~\ref{lemma:nashSequencePopulation}  for all $j\in D$ we have $x_j^{(t_0)}\geq x^*_j = \qstar$ and thus $x_2^{(t_0)} = \qstar$. Since $2$ is nominated by $1$ the best action response for $2$ is $x_2^{(t_0+1)}\leq \qstar-x_1^{(t_0)}<\qstar$.
\end{proof}


\section{Properties of $D$-sets}\label{App:DSets}


First we show some upper and lower bounds on the sizes of $D$-sets for all graphs. Subsequently we look at specific classes of graphs that are commonly studied. The following definitions are required. For a $G = (V, E)$ be a graph and capacity function $\shareNumber : V \to \Natural_{0}$, we define $\underline{\shareNumber} = \min_i \shareNumber(i)$, $\bar \shareNumber = \max_i \shareNumber(i)$, and $\underline{\degree{G}} := \min_i \degree{G}(i)$. Furthermore, letting $\Real$ denote the real line, we define the ceiling and floor functions as follows:\ for any $x \in \Real$, let $\ceiling{x} := \min\set{n \in \Natural \, | \, x \leq n}$ and let $\flooring{x} := \max\set{n \in \Natural \, | \, x \geq n}$.

\subsection{Bounds on $D$-sets}

\begin{lemma} \label{lem:1st}
Let $G = (V, E)$ be a graph and $\shareNumber : V \to \Natural_{0}$ be a capacity function. Then for any $DP$-Nash subgraph $H$ of $G$ we have,
\begin{description}
\item[(i)]
$|D \cup P| \leq (1+ \bar \shareNumber) |D|$
\item[(ii)]\label{goat}
$|D \cup P| \geq |D| +\min \{\underline \shareNumber, \underline{\degree{G}} \}$
\end{description}
\end{lemma}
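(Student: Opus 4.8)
The first thing to record is that the partite sets $D$ and $P$ of any $\shareNumber$-$DP$-subgraph partition $V(G)$, so $\abs{D\cup P}=\abs{D}+\abs{P}$ and both bounds are really statements about $\abs{P}$: part (i) is equivalent to $\abs{P}\le\bar\shareNumber\,\abs{D}$ and part (ii) to $\abs{P}\ge\min\set{\underline\shareNumber,\underline{\degree{G}}}$. Once this is noted, each part follows by adding $\abs{D}$ back to the corresponding estimate for $\abs{P}$.

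For part (i), the plan is to double-count the edges of $H$. Because $H$ is bipartite with every edge joining $D$ to $P$, summing $H$-degrees over $D$ and over $P$ each count $\abs{E(H)}$, which gives
\[
\abs{P}\le\sum_{i\in P}\degree{H}(i)=\abs{E(H)}=\sum_{i\in D}\degree{H}(i)=\sum_{i\in D}\min\set{\shareNumber(i),\degree{G}(i)}\le\bar\shareNumber\,\abs{D},
\]
where the first inequality uses that every vertex of $P$ has positive (hence at least one) degree in $H$, and the last uses $\min\set{\shareNumber(i),\degree{G}(i)}\le\shareNumber(i)\le\bar\shareNumber$. Adding $\abs{D}$ to both ends yields $\abs{D\cup P}\le(1+\bar\shareNumber)\abs{D}$.

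For part (ii) I would argue directly. If $\min\set{\underline\shareNumber,\underline{\degree{G}}}=0$ the claim is immediate since $\abs{P}\ge0$. Otherwise $\underline\shareNumber\ge1$ and $\underline{\degree{G}}\ge1$, so every vertex has positive capacity and positive degree. In particular $D\neq\emptyset$: were $D$ empty we would have $P=V(G)$, forcing $H$ to have no edges and hence every vertex of $P$ to have degree $0$, contradicting the definition of a $\shareNumber$-$DP$-subgraph on a nonempty vertex set. Fixing any $i\in D$, its degree in $H$ is $\degree{H}(i)=\min\set{\shareNumber(i),\degree{G}(i)}\ge\min\set{\underline\shareNumber,\underline{\degree{G}}}\ge1$, and since $H$ is bipartite all of these neighbours lie in $P$; hence $\abs{P}\ge\degree{H}(i)\ge\min\set{\underline\shareNumber,\underline{\degree{G}}}$, and adding $\abs{D}$ gives the stated bound.

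Neither part is difficult, so the only real obstacle is careful bookkeeping. In part (i) the double count must be arranged so that the positive-degree property of $P$-vertices supplies the lower bound $\abs{P}$ while the capacity cap supplies the upper bound $\bar\shareNumber\,\abs{D}$. In part (ii) the points to watch are disposing of the degenerate case $\min\set{\underline\shareNumber,\underline{\degree{G}}}=0$ and verifying that $D$ is nonempty before invoking the $H$-degree of a $D$-vertex; this last step is exactly where the definitional requirement that every $P$-vertex have positive $H$-degree is used.
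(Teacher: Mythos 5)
Your proof is correct and takes essentially the same route as the paper's: part (i) is the paper's own edge count (the paper bounds $|P|$ directly by $\sum_{i\in D}\min\{\shareNumber(i),\degree{G}(i)\}$, which is exactly your double count of $E(H)$), and part (ii) likewise fixes a vertex of $D$ and uses that its $H$-neighbourhood lies in $P$. Your additional bookkeeping in part (ii) — handling $\min\{\underline{\shareNumber},\underline{\degree{G}}\}=0$ and verifying $D\neq\emptyset$ — only makes explicit a step the paper leaves implicit when it picks ``any $i\in D$''.
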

\begin{proof}
  (i) We have
  \begin{align*}
     |V| &= |D \cup P|\\
     &= |D| +|P| \\
     &\leq |D|+\sum_{ i \in D} \min \{\shareNumber(i), d_G(i) \} \\
     &\leq |D| +\sum_{ i \in D} \shareNumber(i) \\
     &\leq |D|+\bar \shareNumber |D| \\
     &= (1+ \bar \shareNumber ) |D|
  \end{align*}
  (ii) For any $i \in D$ we have that $\neighbourhood{H}(i) \subseteq P$. As such
  \begin{align*}
    |V| &= |D\cup P| \\
    &\geq |D|+\neighbourhood{H}(i) \\
    &= |D|+\min \{\shareNumber(i), \neighbourhood{G}(i) \} \\
    &\geq |D| +\min \{\underline \shareNumber, \underline{\degree{G}} \}
  \end{align*}
\end{proof}

Recall that for a given graph $G$ and given capacity function $\shareNumber$, $\delta^\shareNumber_{\min}(G)$ and $\delta^\shareNumber_{\max}(G)$ represent the size of minimum and maximum $D$-sets taken over all $DP$-Nash subgraphs of $G$. We have the following lemma.
\begin{lemma} \label{lem:2nd}
    Let $G = (V, E)$ be a graph and $\shareNumber : V \to \Natural_{0}$ be a capacity function. Then we have \\
    (i) $ \delta^\shareNumber_{\min}(G) \geq \frac{|V|}{1+ \bar \shareNumber}$. \\
    (ii) $ \delta^\shareNumber_{\max}(G) \leq |V| - \min \{\underline \shareNumber, \underline{\degree{G}} \}$.
\end{lemma}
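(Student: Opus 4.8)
The plan is to derive both bounds directly from Lemma~\ref{lem:1st} by reinterpreting its two inequalities as statements about the extremal quantities $\delta^\shareNumber_{\min}(G)$ and $\delta^\shareNumber_{\max}(G)$. Recall that these are defined as the minimum and maximum sizes of $D$-sets taken over all $\shareNumber$-$DP$-subgraphs $H$ of $G$, and that in every such subgraph we have $|V| = |D \cup P|$. So the entire argument amounts to observing that Lemma~\ref{lem:1st} holds for \emph{every} $\shareNumber$-$DP$-subgraph, and then specialising to the extremal ones.

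For part (i), I would take an arbitrary $\shareNumber$-$DP$-subgraph $H$ with $D$-set of size $\delta^\shareNumber_{\min}(G)$ — i.e.\ one achieving the minimum. Lemma~\ref{lem:1st}(i) gives $|V| = |D \cup P| \leq (1 + \bar\shareNumber)|D| = (1 + \bar\shareNumber)\,\delta^\shareNumber_{\min}(G)$. Since $1 + \bar\shareNumber > 0$, dividing through yields $\delta^\shareNumber_{\min}(G) \geq |V|/(1 + \bar\shareNumber)$, which is exactly the claimed bound. For part (ii), I would instead take an $\shareNumber$-$DP$-subgraph achieving the maximum, so its $D$-set has size $\delta^\shareNumber_{\max}(G)$. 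Applying Lemma~\ref{lem:1st}(ii) gives $|V| = |D \cup P| \geq |D| + \min\{\underline\shareNumber, \underline{\degree{G}}\} = \delta^\shareNumber_{\max}(G) + \min\{\underline\shareNumber, \underline{\degree{G}}\}$, and rearranging produces $\delta^\shareNumber_{\max}(G) \leq |V| - \min\{\underline\shareNumber, \underline{\degree{G}}\}$.

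There is essentially no obstacle here: the lemma is a routine corollary of the preceding one, and the only point requiring any care is the quantifier direction. Lemma~\ref{lem:1st} is stated for an arbitrary $\shareNumber$-$DP$-subgraph, so its inequalities in fact hold \emph{simultaneously} for all of them; choosing the subgraph that attains the relevant extremum is what converts the per-subgraph inequality into a bound on the extremal quantity. One should note that the existence of at least one $\shareNumber$-$DP$-subgraph (guaranteed by the first part of the proof of Theorem~\ref{theorem:balancedPSNE}) is needed to ensure $\delta^\shareNumber_{\min}(G)$ and $\delta^\shareNumber_{\max}(G)$ are well-defined, but this has already been established. Thus the proof is a two-line deduction in each part, and I would simply write it out as such.
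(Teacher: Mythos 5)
Your proof is correct and is essentially the paper's argument: both derive the bounds by applying Lemma~\ref{lem:1st} to the extremal $\shareNumber$-$DP$-subgraph and using $|D \cup P| = |V|$, the only difference being that the paper phrases the deduction as a proof by contradiction while you state it directly (the contrapositive of the same step). Your added remark that existence of at least one $\shareNumber$-$DP$-subgraph is needed for $\delta^\shareNumber_{\min}(G)$ and $\delta^\shareNumber_{\max}(G)$ to be well-defined is a sensible point the paper leaves implicit.
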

\begin{proof}
\begin{description}
\item[(i)]
Suppose to a contradiction that $\delta^\kappa_{\min}(G) < \frac{|G|}{1+ \bar \shareNumber}$. Then there exists a $DP$-Nash subgraph of $G$ such that $|V| <  \frac{|G|}{1+ \bar \shareNumber} =  \frac{|D \cup P|}{1+ \bar \shareNumber}$ which is contradiction to part (i) of Lemma \ref{lem:1st}.
\item[(ii)]
Suppose, again to a contradiction, that $\delta^k_{\max}(G) > |V| - \min \{\underline \shareNumber, \underline{\degree{G}} \}$. Then there exists a $DP$-Nash subgraph $H$ of $G$ with $D$-set $D$ such that $|D| > |V| - \min \{\underline \shareNumber, \underline{\degree{G}} \} =  |D\cup P| - \min \{\underline \shareNumber, \underline{\degree{G}}\}$ which is a contradiction to part (ii) of Lemma \ref{lem:1st}.
\end{description}
\end{proof}

{\begin{remark}
  \normalfont For Lemma \ref{lem:2nd}, we need
  \[
    |G| - \min \{ \underline \shareNumber, \underline{\degree{G}} \} \geq \frac{|G|}{1+ \bar \shareNumber} \iff (1+\bar \shareNumber) ( |G| - \min \{ \underline \shareNumber, \underline{\degree{G}} \} ) \geq |G|
  \]
  To show this, first observe that
  \begin{equation}\label{eq:int-con}
    \bar \shareNumber + \bar \shareNumber (|V| - 1) \geq \min \{\underline \shareNumber, \underline{\degree{V}} \} + \bar \shareNumber \min \{ \underline \shareNumber, \underline{\degree{G}} \}
  \end{equation}
  because $\bar \shareNumber \geq \underline \shareNumber$ and $|V|-1 \geq \degree{G}(i)$ for all $i$. By rearranging \eqref{eq:int-con}, we find that
  This follows because
  \[
  \bar \shareNumber |V| - (1+\bar \shareNumber)\min \{ \underline \shareNumber, \underline{\degree{G}}  \} \geq 0 \iff (1+\bar \shareNumber) ( |V| - \min \{ \underline \shareNumber, \underline{\degree{G}} \} ) \geq |V|
  \]
  $\blacksquare$
\end{remark}

Now, putting Lemmas \ref{lem:1st} and \ref{lem:2nd} together, we obtain the following proposition.
\begin{proposition}\label{prop:bound}
Suppose that $\kappa(i) \leq \underline{\degree{G}}$ for some $i$. Then we have
\[
    \Ceiling{\frac{|V|}{1+\bar\shareNumber }} \leq   \delta^\shareNumber_{\min}(G) \leq  \delta^\shareNumber_{\max}(G)  \leq |V| - \underline \shareNumber
\]
\end{proposition}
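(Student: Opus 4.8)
The plan is to deduce the entire three-term chain directly from Lemmas \ref{lem:1st} and \ref{lem:2nd} together with the hypothesis; no fresh combinatorial construction is needed, since both lemmas already do the structural work and the proposition is essentially a bookkeeping combination of them.

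First I would dispose of the middle inequality $\delta^\shareNumber_{\min}(G) \leq \delta^\shareNumber_{\max}(G)$, which is immediate. By Theorem \ref{theorem:balancedPSNE} the collection of $\shareNumber$-$DP$-subgraphs of $G$ is nonempty, so the minimum and maximum sizes of their $D$-sets are both well defined, and a minimum over a nonempty set never exceeds the maximum over that same set.

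Next, for the left-hand inequality I would invoke part (i) of Lemma \ref{lem:2nd}, which gives $\delta^\shareNumber_{\min}(G) \geq \frac{|V|}{1+\bar\shareNumber}$. The one extra observation is integrality: $\delta^\shareNumber_{\min}(G)$ is the cardinality of a $D$-set and hence a nonnegative integer, and any integer bounded below by a real number $x$ is also bounded below by $\Ceiling{x}$. Applying this with $x = \frac{|V|}{1+\bar\shareNumber}$ upgrades the bound to $\delta^\shareNumber_{\min}(G) \geq \Ceiling{\frac{|V|}{1+\bar\shareNumber}}$, which is the asserted left endpoint.

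Finally, for the right-hand inequality I would bring in the hypothesis, and this is the only step where any care is required. Since $\shareNumber(i) \leq \underline{\degree{G}}$ for some $i$, we have $\underline{\shareNumber} = \min_j \shareNumber(j) \leq \shareNumber(i) \leq \underline{\degree{G}}$, and hence $\min\{\underline{\shareNumber}, \underline{\degree{G}}\} = \underline{\shareNumber}$. Substituting this collapse of the minimum into part (ii) of Lemma \ref{lem:2nd} yields $\delta^\shareNumber_{\max}(G) \leq |V| - \min\{\underline{\shareNumber}, \underline{\degree{G}}\} = |V| - \underline{\shareNumber}$, completing the chain. The main (and essentially the sole) obstacle is recognising that the assumption $\shareNumber(i) \leq \underline{\degree{G}}$ is precisely what forces $\underline{\shareNumber} \leq \underline{\degree{G}}$ and thereby turns the weaker bound $|V| - \min\{\underline{\shareNumber}, \underline{\degree{G}}\}$ of Lemma \ref{lem:2nd} into the cleaner stated bound $|V| - \underline{\shareNumber}$; without that hypothesis the minimum need not simplify.
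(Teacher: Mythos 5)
Your proposal is correct and takes essentially the same route as the paper, whose proof simply notes that the middle inequality holds by definition and that the outer bounds follow from Lemma \ref{lem:2nd} parts (i) and (ii). You merely make explicit the two steps the paper leaves implicit: the integrality of $\delta^{\shareNumber}_{\min}(G)$ that upgrades the bound $\frac{|V|}{1+\bar\shareNumber}$ to $\Ceiling{\frac{|V|}{1+\bar\shareNumber}}$, and the use of the hypothesis $\shareNumber(i)\leq \underline{\degree{G}}$ to collapse $\min\{\underline{\shareNumber},\underline{\degree{G}}\}$ to $\underline{\shareNumber}$.
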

\begin{proof}
The middle inequality is true by definition, while the left and right inequalities come from Lemma \ref{lem:2nd} parts (i) and (ii) respectively.
\end{proof}

Proposition \ref{prop:bound} holds for all graphs. However, we can improve upon those bounds for various classes of graph. This is the purpose of the following subsection.

\subsection{$D$-Sets for complete graphs}
\noindent
In this subsection we suppose that $G = (V, E)$ is a complete graph and that all individuals have the same capacity so that $1 \leq \shareNumber(i)= k \leq |V| -1$ for all $i \in V$.

Given this we have $\underline{\degree{G}} = |V|-1$ and thus $\min_{i} \{ \shareNumber(i), \underline{\degree{G}} \} = \min\{ \underline \shareNumber, \underline{\degree{G}} \} = k$ for all $i \in V$. Thus from Proposition \ref{prop:bound} we obtain the following bounds,
\begin{equation}\label{eq:comp-bound}
  \Ceiling{ \frac{|V|}{1+k} } \leq   \delta^\shareNumber_{\min}(G)
\leq  \delta^\shareNumber_{\max}(G) \leq |V| - k
\end{equation}
In particular, if  $k= |V|-1$, then
\[
    \delta^\shareNumber_{\min}(G) = \delta^\shareNumber_{\max}(G)=1
\]

We can show that there exist $k$-$DP$-subgraphs that achieve the lower and upper bounds, $\Ceiling{ \frac{|V|}{1+k} }$ and $ |V| - k $. \\

To achieve the lower bound, choose $\Ceiling { \frac{|V|}{1+k} }$ individuals from $G$ and designate these individuals as $D$. Now define let $R= V \setminus D$. Then, using that
\[
        |V| - k \geq \Ceiling{ \frac{|V|}{1+k} } \iff |V| - \Ceiling{ \frac{|V|}{1+k} } \geq k,
\]
and $\Ceiling{ \frac{|V|}{1+k} } \geq 1$ (since $k \leq |V|-1$) and
\[
     \frac{|V|}{1+k} \leq \Ceiling{ \frac{|V|}{1+k} } \iff |V| - \Ceiling{ \frac{|G|}{1+k} } \leq \Ceiling{ \frac{|G|}{1+k} } k
\]
we obtain
\[
    k \leq |R| \leq \Ceiling{ \frac{|V|}{1+k} } k.
\]
That is, there are least $k$ agents in $R$ and at most $\lceil \frac{|V|}{1+k} \rceil k$. Thus, since $G$ is a complete network, we make each agent in $D$ nominate $k$ agents in $R$ such that $D \cup R=G$ and obtain $DP$-Nash subgraph of $G$.

For the upper bound, we choose $|V|-k$ individuals as elements in $D$. Then there exists exactly $k$ remaining individuals that we refer to as $P$. Since $G$ is a complete graph, we make it such that all individuals in $D$ nominate each of the $k$ remaining agents in $P$ and obtain $DP$-Nash subgraph of $G$. Thus we obtain the following proposition.

\begin{proposition}
Suppose that $G = (V, E)$ is a complete graph and $1 \leq \shareNumber(i)=k \leq |G| -1$ for all $i$. Then we have \begin{equation}\label{eq:comp-bound}
  \Ceiling{ \frac{|V|}{1+k} } \leq \delta^\shareNumber_{\min}(G) \leq  \delta^\shareNumber_{max}(G)  \leq |G| - k.
\end{equation}
Moreover there exist bipartite $DP$-Nash subgraphs $\underbar{H}$ and $\bar{H}$ with partite sets  $(\underline D, \underline P)$ and  $(\bar  D, \bar P)$ such that
\[
    |\underline D| = \Ceiling{ \frac{|V|}{1+k} } \text{ and }  |\bar D|=  |V| - k
\]
\end{proposition}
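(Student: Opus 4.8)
The plan is to read the displayed inequality chain straight off Proposition \ref{prop:bound} and then to certify tightness by exhibiting two explicit $\shareNumber$-$DP$-subgraphs, one attaining each extreme. Since $G$ is complete we have $\underline{\degree{G}} = |V| - 1$, and the hypothesis $k \leq |V| - 1$ gives $\shareNumber(i) = k \leq \underline{\degree{G}}$ for every $i$, so Proposition \ref{prop:bound} applies. With $\shareNumber \equiv k$ we have $\bar\shareNumber = \underline\shareNumber = k$ and $\min\set{\underline\shareNumber, \underline{\degree{G}}} = k$, so the proposition yields precisely $\Ceiling{|V|/(1+k)} \leq \delta^\shareNumber_{\min}(G) \leq \delta^\shareNumber_{\max}(G) \leq |V| - k$. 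I would also record the consequence $\Ceiling{|V|/(1+k)} \leq |V| - k$: it holds because a $\shareNumber$-$DP$-subgraph exists by Theorem \ref{theorem:balancedPSNE}, so $\delta^\shareNumber_{\min}(G) \leq \delta^\shareNumber_{\max}(G)$ and the outer bounds can be chained.

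For the upper bound I would construct $\bar H$ directly. Pick any $\bar D \subseteq V$ with $|\bar D| = |V| - k$ and set $\bar P = V \setminus \bar D$, so $|\bar P| = k$. Let every vertex of $\bar D$ nominate all $k$ vertices of $\bar P$; this is legal because $G$ is complete (each driver is adjacent to every passenger) and because $\min\set{k, \degree{G}(i)} = k$ is exactly the number of nominations each driver must make. Then every driver has $H$-degree $k$, every passenger has $H$-degree $|V| - k \geq 1$ (using $k \leq |V| - 1$), and all edges run between $\bar D$ and $\bar P$, so $\bar H$ is a $\shareNumber$-$DP$-subgraph with $|\bar D| = |V| - k$.

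For the lower bound I would set $d := \Ceiling{|V|/(1+k)}$, designate any $d$ vertices as $\underline D$, and write $R := V \setminus \underline D$ with $r := |R| = |V| - d$. The preliminary step is the double inequality $k \leq r \leq dk$: the left inequality is $d \leq |V| - k$, which is the consequence recorded above, while the right inequality $|V| \leq d(1+k)$ is immediate from $d \geq |V|/(1+k)$. Given $k \leq r \leq dk$, I would supply a nomination assignment in which each of the $d$ drivers nominates exactly $k$ distinct vertices of $R$ and every vertex of $R$ is nominated at least once (all such nominations are edges because $G$ is complete). A clean cyclic choice works: label $R = \set{r_0, \dots, r_{r-1}}$ and let driver $t$ (for $0 \leq t \leq d-1$) nominate $r_{(tk) \bmod r}, r_{(tk+1) \bmod r}, \dots, r_{(tk+k-1) \bmod r}$. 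Since $k \leq r$, the $k$ indices of a single driver are consecutive residues and hence distinct; since $dk \geq r$, the $dk$ nominations sweep every residue class, so each vertex of $R$ gets covered. This makes $\underline H$ a $\shareNumber$-$DP$-subgraph with $|\underline D| = d = \Ceiling{|V|/(1+k)}$.

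The only genuinely delicate point is the lower-bound construction, and within it the feasibility window $k \leq |R| \leq dk$ rather than the covering itself. I expect the main obstacle to be confirming that the ceiling arithmetic pins $|R|$ into exactly this range, so that on the one hand each driver has at least $k$ distinct targets available in $R$, and on the other hand the $d$ drivers jointly have enough nomination slots to dominate all of $R$. Once that is in place, the cyclic assignment settles existence constructively, and both the inequality chain and the upper-bound construction are routine.
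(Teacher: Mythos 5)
Your proposal is correct and follows essentially the same route as the paper: the inequality chain is read off Proposition \ref{prop:bound}, the upper bound is attained by $|V|-k$ drivers all nominating the same $k$ passengers, and the lower bound is attained by pinning $|R|$ into the window $k \leq |R| \leq \Ceiling{|V|/(1+k)}\,k$ and then covering $R$ with the drivers' nominations. Your two refinements---deriving $\Ceiling{|V|/(1+k)} \leq |V|-k$ from the existence result of Theorem \ref{theorem:balancedPSNE} rather than from direct arithmetic, and making the covering explicit via the cyclic assignment that the paper leaves implicit---are harmless and, in the second case, slightly tighten the paper's exposition.
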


\begin{figure}[ht!]
\centering
\includegraphics{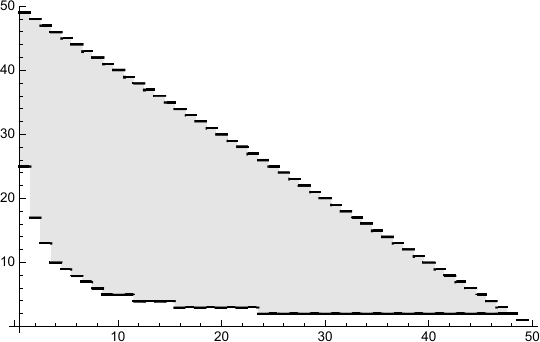}
\caption{The lower and upper bounds for the complete graph with $|V|=50$. Horizontal axis: $k=1, \cdots, 49$. Vertical axis:\ lower and upper bounds}\label{fig:comp-bd}
\end{figure}

}


\newpage
\bibliographystyle{plainnat}
\bibliography{netflix.bib}


\end{document}